\tikzset{
  semi/.style={
    semicircle,
    draw
  }
}
\theoremstyle{plain}
\newtheorem{proposition}{Proposition}
\newcommand\independent{\protect\mathpalette{\protect\independenT}{\perp}}
\def\independenT#1#2{\mathrel{\rlap{$#1#2$}\mkern2mu{#1#2}}}
\def\bSig\mathbf{\Sigma}
\title[Double sampling for missing data]{Double sampling for
  informatively missing data in electronic health record-based
  comparative effectiveness research}
\author{Alexander W. Levis$^{1,*}$\email{alevis@cmu.edu}, Rajarshi Mukherjee$^{2}$, Rui Wang$^{2,3}$, \\
\textbf{Heidi Fischer}$\mathbf{^{4}}$ \textbf{and Sebastien Haneuse}$\mathbf{^{2}}$ \\ \\
$^{1}$Department of Statistics \& Data Science, Carnegie Mellon University, Pittsburgh, PA, USA \\
$^{2}$Department of Biostatistics, Harvard T.H. Chan School of Public Health, Boston, MA, USA \\
$^{3}$Department of Population Medicine, Harvard Pilgrim Health Care Institute \\ and Harvard Medical School, Boston, MA, USA \\
$^{4}$Kaiser Permanente Department of Research and Evaluation, 
Pasadena, CA, USA}
\begin{document}


\date{{\it Received October} 2022.}



\pagerange{1--23}
\volume{00}
\pubyear{2021}
\artmonth{October}


\doi{10.1111/j.0000-0000.0000.00000.x}


\label{firstpage}


\begin{abstract}
Missing data arise in most applied settings and are ubiquitous in electronic health records (EHR). When data are missing not at random (MNAR) with respect to measured covariates, sensitivity analyses are often considered. These \textit{post-hoc} solutions, however, are often unsatisfying in that they are not guaranteed to yield concrete conclusions. Motivated by an EHR-based study of long-term outcomes following bariatric surgery, we consider the use of double sampling as a means to mitigate MNAR outcome data when the statistical goals are estimation and inference regarding causal effects. We describe assumptions that are sufficient for the identification of the joint distribution of confounders, treatment, and outcome under this design. Additionally, we derive efficient and robust estimators of the average causal treatment effect under a nonparametric model and under a model assuming outcomes were, in fact, initially missing at random (MAR). We compare these in simulations to an approach that adaptively estimates based on evidence of violation of the MAR assumption. Finally, we also show that the proposed double sampling design can be extended to handle arbitrary coarsening mechanisms, and derive nonparametric efficient estimators of any smooth full data functional.

\vspace{3mm}
 
\end{abstract}

%

\begin{keywords}
  Causal inference; Double sampling; Missing data; Semiparametric theory; Study design
\end{keywords}


\maketitle


%

\section{Introduction}

Missing data is a well-studied problem, with researchers having a vast array of statistical methods at their disposal including inverse-probability weighting (IPW) \citep{seaman2013}, multiple imputation \citep{rubin2004}, and doubly-robust methods \citep{robins1994, tsiatis2007}. For the majority of these, the missing at random (MAR) assumption \citep{rubin1976} is, in one way or another, invoked. For settings where the MAR assumption is not viewed as plausible, methods exist based on alternative sets of identifying assumptions \citep[e.g., ][]{malinsky2020}, availability of an instrumental variable \citep[e.g., ][]{sun2018b} or a ``shadow variable'' \citep[e.g., ][]{miao2016}, sensitivity analyses \citep[e.g., ][]{robins2000} and the estimation of bounds \citep[e.g., ][]{manski1990}. Interestingly, common to all of these methods is that they approach the task of dealing with missing data as a \textit{post-hoc} challenge, that is with an exclusive focus on methods for the data at-hand. 

An alternative strategy is to engage in additional data collection, referred to in this paper as \textit{double-sampling}, specifically to obtain information that could either inform the plausibility of missingness assumptions or be used in an analysis to mitigate bias, or both. Such a strategy is common when addressing confounding \citep{borgan2018} and measurement error and/or missclassification \citep{carroll2006, amorim2021}, but seems to have been under-explored as a strategy for addressing missing data, \citep{hansen1946, frangakis2001, guan2018, miao2021}. Moreover, a general treatment of double sampling for missing data, or more generally coarsened data \citep{heitjan1991}, in the context of causal inference has not been developed.

One important area of biomedical and public health research where missing data is almost ubiquitous is that of studies making use of electronic health records (EHR). With large sample sizes and rich covariate information over extended periods, EHR data represent a significant and cost-effective opportunity \citep{haneuse2017}. Furthermore, these data present a key alternative when randomized clinical trials are not feasible or could not be conducted ethically. EHR systems, however, are typically designed to support clinical and/or billing activities, and not for any particular research agenda. As such, investigators who wish to use EHR data must deal with potential threats to validity including, as mentioned, missing data. Moreover, whether a particular data element is observed in an EHR is likely dependent on the complex interplay of numerous factors \citep{haneuse2016a}, which may, in turn, cast doubt on the plausibility of the MAR assumption. In such settings, augmentation of the EHR with additional information via double sampling may be especially helpful \citep{haneuse2016}. Indeed, \citet{koffman2021} recently reported on a telephone-based survey used to obtain additional information for use in an investigation of the association between bariatric surgery and five-year weight outcomes using data from an EHR \citep[e.g.,][]{arterburn2020}. Key to the latter was the fact that many subjects who had undergone bariatric surgery disenrolled from their health plan before their five-year post-surgery date. Towards understanding the reasons for disenrollment and to evaluate the MAR assumption, the investigators conducted the telephone-based survey to obtain the otherwise missing weight information and other relevant factors. Although their report focuses on disenrollment in relation to missingness, the authors did stress the potential for using the augmented data to correct an otherwise invalid analysis (i.e. of the association between bariatric surgery and weight at three years), but identified the need for novel statistical methods to be developed.

Motivated by this backdrop, we consider double sampling as a means to deal with potentially informatively missing or MNAR data. Specifically, we present novel identification results for the causal average treatment effects in observational settings with missing outcome data. Based on these we describe a suite of five analysis strategies for the context we consider, each distinguished by the nature of the data that is taken to be available, the assumptions that analysts are required to make and the estimator that is to be employed. For the proposed strategies that, to-date, have not been formally described, we establish asymptotic results and characterize efficiency and robustness properties. Finally, we generalize many of these results to allow for arbitrary coarsening of the desired complete data of interest, where complete data are recovered on a subsample via intensive follow-up.  Note, throughout, when not provided in the text, detailed proofs are presented in Appendices A and B.

\section{A hypothetical EHR-based study} \label{sec:hypothetical}

\subsection{Context, notation and terminology}

To anchor the methods we propose, consider a hypothetical EHR-based study for which the goal is to compare two bariatric surgery procedures (e.g., Roux-en-Y gastric bypass vs sleeve gastrectomy) in relation to three-year weight change outcomes. To that end, we assume that appropriate inclusion/exclusion criteria have been specified and operationalized to identify all patients in the EHR who are `eligible' for the study, resulting in a sample of size $n$ which is taken to be a random sample from the population of interest.

Formally, let $A \in \mathcal{A}$, with $|\mathcal{A}| < \infty$, denote the treatment and $Y \in \mathbb{R}$ the outcome of interest. In the hypothetical study, $A$ represents the type of surgery, and $Y$ the change in BMI at three years post-surgery relative to baseline. Then, let $Y(a)$ denote the potential outcome or counterfactual had we fixed treatment level $A = a$, for $a \in \mathcal{A}$. We take the target parameter of interest to be some contrast among the mean counterfactuals, $\mathbb{E}[Y(a)]$. In the hypothetical bariatric surgery study, for example, a natural contrast would be the average treatment effect (ATE), $\mathbb{E}[Y(1)] - \mathbb{E}[Y(0)]$. Throughout this work, towards estimating $\mathbb{E}[Y(a)]$ using the data from the EHR, we invoke the usual `causal' identifying assumptions of consistency, no unmeasured confounding and positivity \citep{hernan2019}. Regarding the control of confounding bias, we assume that a sufficient set of confounders $\boldsymbol{L} \in \mathbb{R}^d$, available in the EHR, has been identified to render the assumption of no unmeasured confounding plausible.

For the setting just described, we refer to $(\boldsymbol{L}, A, Y)$ as the \textit{complete data} and conceive it as arising from some joint distribution, $P_c$. Given an i.i.d sample of size $n$ from $P_c$ one could estimate $\mathbb{E}[Y(a)]$ by, say, targeting the $g$-formula functional, $\chi_a(P_c)\ =\ \mathbb{E}_{P_c}[\mathbb{E}_{P_c}[Y\mid \boldsymbol{L}, A=a]]$, which identifies $\mathbb{E}[Y(a)]$ under the aforementioned causal assumptions \citep{robins1986}. For example, one could use the plugin estimator to estimate the ATE via $\widehat{\mbox{ATE}}\ =\ \chi_1(\widehat{P_c}) - \chi_0(\widehat{P_c})$, relying on the fit of an outcome regression model for $\mathbb{E}_{P_c}[Y\mid \boldsymbol{L}, A=a]$.

To complete the context we consider, we assume that, while $\boldsymbol{L}$ and $A$ are measured on all patients, the outcome is only partially observed; that is for some patients the value of $Y$ is missing. In the hypothetical example this may arise because a patient disenrolled from the health plan prior to the 3-year post-surgery date or because they did not have an encounter within some (reasonable) window of the date. Formally, let $R \in \{0,1\}$ be an indicator for the observance of $Y$ in the EHR; at the outset, therefore, the information that is readily available consists of $n$ i.i.d replicates of $(\boldsymbol{L}, A, R, RY)$, referred to as the \textit{incomplete data}.


\subsection{Analysis strategy \#1}\label{analysis:1}

Given incomplete data, one way forward is to combine a complete data strategy, which one would use had such data been available, with some approach for `dealing' with the missing data. For example, one could combine the use of the $g$-formula indicated above with either inverse-probability weighting based on a model for $R$ or multiple imputation for the missing values of $Y$. In addition to the usual causal assumptions, the validity of such a procedure will hinge on a MAR assumption, such as:
\begin{assumption}[Missing at random outcomes] \label{ass:MAR}
  $R \independent Y \mid \boldsymbol{L}, A$.
\end{assumption}
Crucially, with this assumption in hand, one can proceed as would be done otherwise on the basis of those individuals with $R=1$, since the distribution of $Y \mid \boldsymbol{L}, A$ is the same as $Y \mid \boldsymbol{L}, A, R=1$---that no information on the distribution of $Y \mid \boldsymbol{L}, A, R=0$ is available can be safely ignored. For instance, one can employ the $g$-formula on the basis of the complete-case outcome model $\mathbb{E}[Y \mid \boldsymbol{L}, A = a, R = 1]$.

\subsection{The potential for MNAR}

Suppose, however, that a discussion among the collaborators at the design stage of the study (i.e. at the time the study is being planned and/or a grant/proposal is being written) raises the possibility that the outcome data are MNAR; that is, that Assumption \ref{ass:MAR} may not hold with respect to the baseline covariates $\boldsymbol{L}$ that will be available. In the hypothetical bariatric surgery study, for example, it may be that patients with worse outcomes (in a manner beyond what can be predicted with $\boldsymbol{L}$ and $A$) interact more often with the health care system, and thus have less missing data and/or are less likely to disenroll from their health plan. It is also possible that subjects with worse outcomes are more likely to drop out, perhaps to receive care outside of their original health plan.

The key challenge that a violation of Assumption \ref{ass:MAR} poses is that the distribution of $Y \mid \boldsymbol{L}, A, R=0$ can no longer be safely ignored, and yet there is no information to learn about it. Since MNAR is not testable, the literature on methods for data that are MNAR has generally focused on frameworks for sensitivity analyses and analyses that directly provide bounds on the effect of interest. An alternative to these \textit{post-hoc} approaches, especially if the potential for MNAR is established early in the research process, is to engage in additional data collection efforts that are specifically and preemptively tailored to being able (at least partially) to move `beyond' MNAR.

\section{Double sampling when MNAR is suspected}\label{sec:double}

Central to the proposed work is that follow-up is performed for a subsample of the patients for whom $R = 0$, and that the corresponding (otherwise missing) value of $Y$ is ascertained. While such additional data collection is employed in a wide range of settings, in this paper we follow \citet{frangakis2001} and use the label \textit{double sampling}. Practically, this data collection could be achieved in a number of ways, depending on the context. In some settings, for example, it may be feasible to conduct telephone-based interviews or surveys \citep{haneuse2016, koffman2021}. In other instances, depending on the nature of the missing data, manual chart reviews or natural language processing may be appropriate \citep[e.g.,][]{weiskopf2019}.

\subsection{Notation and terminology}

Let $S \in \{0,1\}$ be an indicator for whether a given patient is selected into the follow-up subsample and outcome data are succesfully obtained. Note, by design, $S \equiv S(1 - R)$; $S$ can only be 1 if $R = 0$ and is equal to 0 deterministically if $R = 1$. With this notation, we refer to $O = (\boldsymbol{L}, A, R, S, (R + S)Y)$ as the \textit{final observed data} and the corresponding joint distribution by $P_o$. Throughout this section, we assume that we observe a random sample $O_1, \ldots, O_n \overset{\mathrm{iid}}{\sim} P_o$.

To complete terminology, we refer to $(\boldsymbol{L}, A, Y, R, S)$ as the \textit{full data} and denote the corresponding joint distribution by $P_f$. Note, both $P_c$ and and $P_o$ are induced by $P_f$ via appropriate marginalization. As will become clear, the use of distinct labels is to help clarify where and how key identifying conditions are employed.

\subsection{Identification}\label{sec:identification}

As with analysis strategy \#1, we proceed by specifying assumptions that permit the identification of the complete data distribution, $P_c$, despite only having access to the final observed data. Specifically, consider the following assumptions:
\begin{assumption}[No informative second-stage selection] \label{ass:ident}
  $S \independent Y \mid \boldsymbol{L}, A, R = 0$.
\end{assumption}
\begin{assumption}[Positivity of second-stage selection probabilities] \label{ass:pos}
  For some $\epsilon > 0$, it holds that $P_o[S = 1 \mid \boldsymbol{L}, A, R = 0] \geq \epsilon,
    P_o$-almost surely.
\end{assumption}
Based on these, consider the following identification result:
\begin{proposition}
\label{prop:ident}
	Under Assumptions \ref{ass:ident} and \ref{ass:pos}, the full data distribution $P_f$ is identified from the final observed data distribution $P_o$.
\end{proposition}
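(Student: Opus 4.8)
The plan is to exhibit the complete-data density $p^*$ on $(\boldsymbol{L}, A, Y, R, S)$ as an explicit functional of the observed-data density $p$, by factorizing $p^*$ according to the natural ordering $(\boldsymbol{L}, A) \to R \to S \to Y$ and checking that each factor is recoverable from $P$. Concretely, I would write
\begin{equation*}
p^*(\boldsymbol{l}, a, y, r, s) = p^*(\boldsymbol{l}, a)\, p^*(r \mid \boldsymbol{l}, a)\, p^*(s \mid \boldsymbol{l}, a, r)\, p^*(y \mid \boldsymbol{l}, a, r, s),
\end{equation*}
and then identify the four pieces in turn. The first three are immediate because $(\boldsymbol{L}, A, R, S)$ is fully observed in $O$: the marginal $p^*(\boldsymbol{l}, a)$ and the selection probability $p^*(r \mid \boldsymbol{l}, a)$ equal their observed-data counterparts, while the structural constraint $S \equiv S(1 - R)$ forces $p^*(s \mid \boldsymbol{l}, a, r = 1) = \mathds{1}(s = 0)$ and leaves only $p^*(s \mid \boldsymbol{l}, a, r = 0)$, which is again directly observed.

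The only real work is in the last factor, $p^*(y \mid \boldsymbol{l}, a, r, s)$, which I would treat cell-by-cell according to the missingness/selection pattern. When $r = 1$ the outcome is initially observed, so this conditional equals $p(y \mid \boldsymbol{l}, a, r = 1)$; when $r = 0, s = 1$ the outcome is recovered in the double-sampling phase, so it equals $p(y \mid \boldsymbol{l}, a, r = 0, s = 1)$. The problematic cell is $r = 0, s = 0$ --- the subjects whose outcomes are missing and who were not followed up --- for which $Y$ never appears in $O$. This is where Assumption \ref{ass:ident} does the essential job: $S \independent Y \mid \boldsymbol{L}, A, R = 0$ implies $p^*(y \mid \boldsymbol{l}, a, r = 0, s = 0) = p^*(y \mid \boldsymbol{l}, a, r = 0, s = 1)$, and the right-hand side is exactly the observed subsample conditional $p(y \mid \boldsymbol{l}, a, r = 0, s = 1)$. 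Assumption \ref{ass:pos} enters precisely to guarantee that this conditioning event has positive probability $P$-almost surely, so that the subsample conditional is well defined and the equality holds on a set of full measure.

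Multiplying the recovered factors back together then expresses $p^*$ as a functional of $p$, which is the claimed identification. I expect the conceptual crux --- and the step I would write most carefully --- to be the $r = 0, s = 0$ cell: everything hinges on transporting the outcome law from the followed-up subsample ($S = 1$) to the non-followed-up stratum ($S = 0$) via the conditional independence in Assumption \ref{ass:ident}, with Assumption \ref{ass:pos} ensuring there is always a nonempty subsample to transport from. A useful byproduct of the same argument is the identification of the full-data law $P_{\boldsymbol{X}}^*$, and in particular of $\mu_a^*(\boldsymbol{L})$ and $\chi_a(P_{\boldsymbol{X}}^*)$, obtained by marginalizing over $(R, S)$ once the $Y$-conditionals are in hand.
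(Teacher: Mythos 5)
Your proposal is correct and follows essentially the same route as the paper: both factor the complete-data density, observe that every piece except the law of $Y$ given $(\boldsymbol{L}, A, R=0, S=0)$ is directly observed, and transport that one conditional from the $S=1$ stratum using Assumption \ref{ass:ident}, with Assumption \ref{ass:pos} guaranteeing the conditioning event is non-null. Your version is simply a more granular, cell-by-cell writing of the paper's one-line factorization.
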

 \begin{proof}
 	Let $p_o = \partial P_o/\partial \mu$ and $p_f = \partial P_f/\partial \mu$ denote the densities for $P_o$ and $P_f$, respectively, both with respect to some dominating measure $\mu$. We can then factor the full data distribution as:
 \begin{align*}
 	p_f(\boldsymbol{L}, A, R, S, Y)\ & =\ p_f(\boldsymbol{L}, A, R, S, RY, (1-R)Y) \\
 		& =\ p_o(\boldsymbol{L}, A, R, S, RY)\ p_f(Y\mid \boldsymbol{L}, A, R=0, S)^{1-R} \\
 		& =\ p_o(\boldsymbol{L}, A, R, S, RY)\ p_o(Y\mid \boldsymbol{L}, A, R=0, S = 1)^{1-R}.
 \end{align*}
with the last of these steps enabled by Assumption \ref{ass:ident}. Since the last expression depends solely on $p_o$, it follows that $P_f$ is identified by $P_o$. Note, we may safely introduce $S = 1$ in the conditioning event by Assumption \ref{ass:pos}.
\end{proof}

\subsection{The proposed framework in practice}
We make several observations. First, implicit to the notation of the proposed double-sampling strategy is that those individuals with $S = 1$ will have complete data. As alluded to, this requires that a subject is selected to be followed-up, and also that their initially missing outcome data is successfully obtained. With this, Assumption~\ref{ass:ident} can be viewed as an MAR-type assumption, specifically in relation to the mechanism underpinning who is selected to be double-sampled and successfully followed-up, and, thus, an alternative to or replacement for the usual MAR assumption regarding $R$.

Second, a crucial distinction from most prior work arises from the specific framing we have adopted; that is, that the discussions that lead to consideration of double sampling occur at the design stage of the broader study. With that framing, investigators will generally have substantially more control over whose data are obtained at the second stage (i.e., $S$) than they would over who has complete data in the EHR (i.e., $R$). Practically, however, depending on the mode of data ascertainment, it may not be that all those who are selected actually have complete data. If, as in \citet{koffman2021}, the mode is a telephone-based survey, then there is no guarantee that all those who are selected will have complete data since individuals may choose not to engage. With this, the plausibility of Assumption \ref{ass:ident} may be compromised; it may be that engagement (and hence completeness of data) remains dependent on the outcome. Other modes of ascertainment, however, such as manual chart review or the reading of an image, do not require direct engagement, so that it is reasonable to foresee that all those with $S = 1$ will indeed have complete data. In general, we believe it is important to distinguish the plausibility of MAR in the EHR from that at the second stage: while the former is likely implausible due to the complexity of interactions with the health system and what measurements get recorded and when, the plausibility of the latter is comparable with that of any prospective cohort study. If anything, access to a rich set of covariates from the EHR may make Assumption~\ref{ass:ident} more plausible. We return to this and other practical issues in the Discussion.

Third, an immediate consequence of Proposition \ref{prop:ident} is that the complete data distribution, $P_c$, is identified and, hence, any functional depending on $P_c$ is identified. Therefore, given Assumptions \ref{ass:ident} and \ref{ass:pos}, one can use the final observed data to estimate quantities of interest, such as mean counterfactuals. This observation, in turn, is the basis for a series of additional analysis strategies proposed in the remainder of this section.

\subsection{Analysis strategy \#2}\label{analysis:2}

Given an i.i.d sample of $O = (\boldsymbol{L}, A, R, S, (R+S)Y)$, Proposition \ref{prop:ident} implies that, if assumptions \ref{ass:ident} and \ref{ass:pos} hold, we can nonparametrically identify $\mathbb{E}[Y(a)]$ via the $g$-formula representation:
\begin{equation}
\label{eqn:tau:AS2}
	\tau_a(P_o) = \mathbb{E}_{P_o}[\mu_{a,R}(\boldsymbol{L})\gamma_a(\boldsymbol{L}) + \mu_{a,S}\{\boldsymbol{L})(1-\gamma_a(\boldsymbol{L})\}],
\end{equation}
with $\gamma_a(\boldsymbol{L}) = P_o(R = 1\mid \boldsymbol{L}, A=a)$, $\mu_{a,R}(\boldsymbol{L}) = \mathbb{E}_{P_o}[Y\mid \boldsymbol{L}, A=a, R=1]$, and $\mu_{a,S}(\boldsymbol{L}) = \mathbb{E}_{P_o}[Y\mid \boldsymbol{L}, A=a, S=1]$. With this representation, one can construct an estimator of the ATE by targeting $\tau_a(P_o)$. One simple approach would be to estimate each component nuisance function, that is $\gamma_a(\boldsymbol{L})$, $\mu_{a,R}(\boldsymbol{L})$ and $\mu_{a,S}(\boldsymbol{L})$, via parametric modeling and combine via an empirical version of expression (\ref{eqn:tau:AS2}). In the following, we derive and characterize a nonparametric efficient and multiply robust estimator $\widehat{\tau}_a$ of $\tau(P_o)$. As we will see, efficient estimation additionally requires a model for the treatment probability $\pi_a(\boldsymbol{L}) = P_o(A = 1 \mid \boldsymbol{L})$, as well as for the double sampling probabilities $\eta_{a, 0}(\boldsymbol{L}) = P_o(S = 1 \mid \boldsymbol{L}, A = a, R = 0)$. Note, as we will prove, this approach has the advantage that under relatively mild conditions, $\sqrt{n}$-rate convergence can still be attained while using flexible machine learning-based models for each nuisance function. The following result is a straightforward application of Proposition A.2 proved in Appendix A.

\begin{theorem} \label{thm:IF}
	Let $\mu_a(\boldsymbol{L}) = \mu_{a,R}(\boldsymbol{L})\gamma_a(\boldsymbol{L}) + \mu_{a,S}(\boldsymbol{L})(1-\gamma_a(\boldsymbol{L}))$. The nonparametric influence function with respect to the maximal tangent space of $\tau_a(P_o)$
  is
  \begin{align*}
  \dot{\tau}_a(O; P_o)
    &= \mu_a(\boldsymbol{L}) - \tau_a(P_o) + \frac{\mathds{1}(A =
      a)}{\pi_a(\boldsymbol{L})}\bigg\{
      \left(R + \frac{S}{\eta_{a,0}(\boldsymbol{L})}\right)Y -
      \mu_a(\boldsymbol{L})
  \\
    & \quad \quad \quad \quad \quad \quad \quad
      \quad \quad \quad \quad \quad \quad \quad
      + (1 - R)\left(1 - \frac{S}{\eta_{a,0}(\boldsymbol{L})}\right)
      \mu_{a, S}(\boldsymbol{L})\bigg\}.
\end{align*}
\end{theorem}

With this efficient influence function in hand, one can proceed by using the standard one-step estimator \citep{bkrw1993, pfanzagl2012}, specifically:
\[
	\widehat{\tau}_a = \tau_a(\widehat{P}_o) + \frac{1}{n}\sum_{i=1}^n \dot{\tau}_a(O; \widehat{P}_o).
\]
For simplicity we assume in the following results that the nuisance functions in $\widehat{P}_o$ are trained on a separate independent sample. In practice, one can use cross-fitting which involves splitting the data into training and test folds, fitting $\widehat{P}_o$ on the training fold, and computing the one-step estimator in the test fold \citep{pfanzagl2012, schick1986}. Full efficiency can be recovered by swapping the roles of the folds and averaging the resulting estimators \citep{chernozhukov2018}. The following result is obtained directly from Theorem A.1 and Proposition A.4 (see Appendix A):
\begin{theorem} \label{thm:asymp}
Suppose
  $\left\lVert \dot{\tau}_a(\, \cdot \, ; \widehat{P}_o) - \dot{\tau}_a(\,
    \cdot \, ; P_o)\right\rVert = o_P(1)$. Then
  \begin{align*}
    \widehat{\tau}_a - \tau_a(P_o) =
    O_P\left(\frac{1}{\sqrt{n}} +
     \mathrm{Bias}_{\tau_a}(\widehat{P}_o; P_o)\right),
  \end{align*}
  where
  \begin{align*}
  \mathrm{Bias}_{\tau_a}(\widetilde{P}_o; P_o)
  &= \mathbb{E}_{P_o}\left(\left(1-
    \frac{\pi_a(\boldsymbol{L})}
    {\widetilde{\pi}_a(\boldsymbol{L})}\right)\left(\widetilde{\mu}_a(\boldsymbol{L})
    - \mu_a(\boldsymbol{L})\right)\right) \\
  & \quad \quad + \mathbb{E}_{P_o}\left((1 - \gamma_a(\boldsymbol{L}))
    \frac{\pi_a(\boldsymbol{L})}
    {\widetilde{\pi}_a(\boldsymbol{L})}\left(1 -
    \frac{\eta_{a, 0}(\boldsymbol{L})}
    {\widetilde{\eta}_{a,0}(\boldsymbol{L})}\right)
    \left(\widetilde{\mu}_{a,S}(\boldsymbol{L})
    - \mu_{a,S}(\boldsymbol{L})\right)\right).
\end{align*}
  for any (fixed) $\widetilde{P}_o$. Moreover,
  if
  $\mathrm{Bias}_{\tau_a}(\widehat{P}_o; P_o) = o_P(n^{-1/2})$, then
  $\sqrt{n}(\widehat{\tau}_a - \tau_a(P_o)) \overset{d}{\to} \mathcal{N}(0,
    V)$, where $V = \mathrm{Var}_{P_o}(\dot{\tau}_a(O; P_o))$ is the
  nonparametric efficiency bound.
\end{theorem}

In addition to consistency, asymptotic normality, and efficiency, Theorem \ref{thm:asymp} reveals a set of robustness properties of $\widehat{\tau}_a$. Specifically, observe that $\mathrm{Bias}_{\tau_a}(\widetilde{P}_o; P_o) = 0$ if: (i) $(\widetilde{\mu}_{a, R}, \widetilde{\mu}_{a, S}, \widetilde{\gamma}_a) = (\mu_{a, R}, \mu_{a, S}, \gamma_a)$; (ii) $(\widetilde{\mu}_{a, S}, \widetilde{\pi}_a) = (\mu_{a, S}, \pi_a)$; or (iii) $(\widetilde{\pi}_a,\widetilde{\eta}_{a, 0}) = (\pi_a, \eta_{a, 0})$. In particular, if the double sampling probabilities $\eta_{a, 0}$ are known by design, then  $\mathrm{Bias}_{\tau_a}(\widetilde{P}_o; P_o) = 0$ if $(\widetilde{\mu}_{a, R}, \widetilde{\mu}_{a, S}, \widetilde{\gamma}_a) = (\mu_{a, R}, \mu_{a, S}, \gamma_a)$ or $\widetilde{\pi}_a = \pi_a$. This robustness extends to the rate of convergence of the estimator (as in \citet{rotnitzky2020}), in that the asymptotic bias term is bounded under mild conditions by a sum of product of $L_2(P_o)$-errors in nuisance function estimation: $\lVert \widehat{\pi}_a - \pi_a \rVert \cdot \lVert \widehat{\mu}_a - \mu_a \rVert$ + $\lVert \widehat{\eta}_{a,0} - \eta_{a, 0} \rVert \cdot \lVert \widehat{\mu}_{a, S} - \mu_{a, S} \rVert$, where $\lVert f \rVert^2 = \mathbb{E}_{P_o}(f(O)^2)$ for any function $f$. In particular, if $\widehat{\pi}_a$, $\widehat{\mu}_{a}$, $\widehat{\eta}_{a,0}$, and $\widehat{\mu}_{a, S}$ are each $L_2(P_o)$-consistent at rate at least $n^{-1/4}$, then (under mild conditions) $\widehat{\tau}_a$ is $\sqrt{n}$-consistent and asymptotically efficient.

Another appealing consequence of the asymptotic normality result in Theorem \ref{thm:asymp} is that simple Wald-type asymptotically valid confidence intervals are immediately available. For example, we can estimate $\mathrm{Var}(\widehat{\tau}_a)$ with $\widehat{\mathrm{Var}}(\widehat{\tau}_a) = \frac{1}{n}\widehat{V} = \frac{1}{n^2}\sum_{i = 1}^n (\dot{\tau}_a(O_i; \widehat{P}_o))^2$, with corresponding Wald-type confidence interval given by $\widehat{\tau}_a \pm z_{1 - \alpha / 2} \sqrt{\widehat{\mathrm{Var}}(\widehat{\tau}_a)}$, where $z_{\alpha}$ denotes the $\alpha$-quantile of the standard normal distribution.

\subsection{Analysis strategy \#3}\label{analysis:3}

A key feature of $\widehat{\tau}_a$ is that there was no need to invoke the usual MAR assumption for $R$; indeed, no assumptions regarding $R$ are invoked. Suppose, however, that following data collection via double-sampling, the investigative team decides that MAR Assumption \ref{ass:MAR} may indeed be plausible. It may be, for example, that new information regarding the mechanisms that underpin $R$ becomes available. Alternatively, suppose the MAR assumption was always viewed as being potentially plausible and that the additional data was collected for the purpose of gaining efficiency in estimating the parameter of interest. In either of these settings, combining Assumption \ref{ass:MAR} with $S \equiv S(1-R)$ and Assumption \ref{ass:ident}, gives that $(R, S) \independent Y \mid \boldsymbol{L}, A$. With this, one can nonparametrically identify $\mathbb{E}[Y(a)]$ via the $g$-formula representation:
\[
	\tau_a^*(P_o) = \mathbb{E}_{P_o}[\mu_{a, \mathrm{MAR}}(\boldsymbol{L})],
\]
where $\mu_{a, \mathrm{MAR}}(\boldsymbol{L}) = \mathbb{E}_{P_o}[Y \mid \boldsymbol{L}, A, R + S = 1]$. As in Section \ref{analysis:3}, while we could construct an estimator of $\tau_a^*(P_o)$ based on a parametric model for the nuisance function $\mu_{a, \mathrm{MAR}}(\boldsymbol{L})$, we derive a semiparametric efficient estimator of $\tau_a^*(P_o)$ that is of a robust augmented IPW form. The following result is proved in Appendix B.

\begin{theorem}\label{thm:IF:3}
  Under Assumptions \ref{ass:MAR} and \ref{ass:ident}, and with $S \equiv S(1-R)$, the semiparametric efficient influence function of $\tau_a^*(P_o)$
  is
  \begin{align*}
  \dot{\tau}_a^*(O; {P}_o)
    &= \mu_{a, \mathrm{MAR}}(\boldsymbol{L}) - \tau_a^*(P_o) + 
\frac{\mathds{1}(A = a)(R + S)}{\pi_a(\boldsymbol{L}) 
\{\gamma_a(\boldsymbol{L})
+ (1 - \gamma_a(\boldsymbol{L}))
\eta_{a, 0}(\boldsymbol{L})\}}
(Y - \mu_{a, \mathrm{MAR}}(\boldsymbol{L})).
\end{align*}
\end{theorem}
With this result, one can again proceed using the standard one-step estimator, specifically:
\[
	\widehat{\tau}_a^* = \tau_a^*(\widehat{P}_o) + \frac{1}{n}\sum_{i=1}^n \dot{\tau}_a^*(O; \widehat{P}_o),
\]
the asymptotic properties of which are established in the following theorem, itself a straightforward consequence of Theorem A.1 and Proposition A.4 in Appendix A.
\begin{theorem}\label{thm:asymp:3}
  Suppose $\left\lVert \dot{\tau}_a^*(\, \cdot \, ; \widehat{P}_o) - \dot{\tau}_a^*(\,
    \cdot \, ; P_o)\right\rVert = o_P(1)$.
Then
\[
    \widehat{\tau}_a^* - \tau_a^*(P_o) =
    O_P\left(\frac{1}{\sqrt{n}} +
     \mathrm{Bias}_{\tau_a^*}(\widehat{P}_o; P_o)\right),\]
  where
 \[
  \mathrm{Bias}_{\tau_a^*}(\widetilde{P}_o; P_o) = \mathbb{E}_{P_o}\left[\left(1-
    \frac{\pi_a(\boldsymbol{L})\{\gamma_a(\boldsymbol{L}) +
(1-\gamma_a(\boldsymbol{L}))\eta_{a, 0}(\boldsymbol{L})\}}
    {\widetilde{\pi}_a(\boldsymbol{L})\{\widetilde{\gamma}_a(\boldsymbol{L}) +
(1-\widetilde{\gamma}_a(\boldsymbol{L}))\widetilde{\eta}_{a, 0}(\boldsymbol{L})\}}\right)\left(\widetilde{\mu}_{a, \mathrm{MAR}}(\boldsymbol{L})
    - \mu_{a, \mathrm{MAR}}(\boldsymbol{L})\right)\right]
\]
  for any $\widetilde{P}_o$. Furthermore, if
  $\mathrm{Bias}_{\tau_a^*}(\widehat{P}_o; P_o) = o_P(n^{-1/2})$, then
  $\sqrt{n}(\widehat{\tau}_a^* - \tau_a^*(P_o)) \overset{d}{\to} \mathcal{N}(0,
    V^*)$, where $V^* = \mathrm{Var}_{P_o}(\dot{\tau}_a^*(O; P_o))$ is the
  semiparametric efficiency bound under the assumptions of Theorem \ref{thm:IF:3}.
  \end{theorem}
Note that analogous comments to those following Theorem \ref{thm:asymp} for $\widehat{\tau}_a$ can be made here for $\widehat{\tau}_a^*$ as well. First, robustness of $\widehat{\tau}_a^*$ follows from the product bias elucidated in Theorem \ref{thm:asymp:3}: $\mathrm{Bias}_{\tau_a^*}(\widetilde{P}_o; P_o) = 0$ if (i) $\widetilde{\mu}_{a, \mathrm{MAR}} = \mu_{a, \mathrm{MAR}}$; or (ii) $(\widetilde{\pi}_a, \widetilde{\gamma}_a, \widetilde{\eta}_{a,0}) = (\pi_a, \gamma_a, \eta_{a,0})$. Moreover, $\widehat{\tau}_a^*$ is $\sqrt{n}$-consistent and asymptotically efficient if each nuisance function estimate is $L_2(P_o)$-consistent at rate at least $n^{-1/4}$. Second, we can estimate $\mathrm{Var}(\widehat{\tau}_a^*)$ with 
$\widehat{\mathrm{Var}}(\widehat{\tau}_a^*) = \frac{1}{n}\widehat{V}^* = \frac{1}{n^2}\sum_{i = 1}^n (\dot{\tau}_a^*(O_i; \widehat{P}_o))^2$, and construct the corresponding Wald-type confidence interval. 

\subsection{Analysis strategy \#4}

The key distinction between analysis strategies $\#2$ and $\#3$ is in relation to whether the MAR Assumption \ref{ass:MAR} holds. In practice there may not be a consensus as to whether it plausibly holds. For example, one collaborator may believe firmly that the outcomes were initially MAR given $(\boldsymbol{L}, A)$ while another may believe that there remains residual dependence of missingness status on the outcome (either directly or through some other, as-yet unmeasured, factor). In this setting, one option may be to conduct a hypothesis test using the observed data to assess Assumption \ref{ass:MAR}; while this assumption is untestable if all one has access to is the initially observed data, under Assumptions \ref{ass:ident} and \ref{ass:pos} the complete data distribution is identified and, in principle, MAR can be tested. Depending on the results of this test, one could report an analysis based on $\widehat{\tau}_a$ which does not rely on Assumption \ref{ass:MAR}, or $\widehat{\tau}_a^*$ which does. While appealing in its simplicity, na\"{i}ve use of the corresponding standard error estimator for the chosen estimator would not account for the uncertainty in the estimator selection. As such, inference will, in general, not be valid. Because of this, we do not derive any theory for this approach but do consider it as a comparator in the simulation study of Section \ref{sec:sim1}.

\subsection{Analysis strategy \#5}

Finally, building on analysis strategy \#4, one could proceed with a more explicitly data-adaptive approach that selects between the two candidate estimators and provides valid post-selection confidence intervals. To this end, we employ the recently developed methods of \citet{rothenhausler2020}. Briefly, as an overview of the method in the one-parameter case, consider estimation of a generic parameter $\theta_0(P)$ when there are $k+1$ asymptotically linear estimators $\widehat{\theta}_0, \widehat{\theta}_1, \ldots, \widehat{\theta}_k$, such that: (i) the `base' estimator $\widehat{\theta}_0$ is $\sqrt{n}$-consistent for $\theta_0(P)$; and (ii) $\widehat{\theta}_j$ is $\sqrt{n}$-consistent for $\theta_j(P)$, for $j = 1, \ldots, k$, where $\theta_j(P)$ may or may not equal $\theta_0(P)$. With this collection of $k+1$ candidate estimators in hand, \citet{rothenhausler2020} proposed an estimator that selects among them by minimizing an estimate of the mean squared error. Formally, let $\widehat{\sigma}_j^2$ be an estimator of the asymptotic variance of $\sqrt{n}(\widehat{\theta}_j - \theta_j(P))$, and $\widehat{\rho}_j^2$ an estimator of the asymptotic variance of $\sqrt{n}(\widehat{\theta}_j - \theta_j(P) - \widehat{\theta}_0 + \theta_0(P))$. Then the procedure selects the $\overline{j}^{th}$ candidate estimator, where $\overline{j} = \underset{j}{\mathrm{arg\,min}}\,\overline{R}(j)$ with $\overline{R}(j) = \max{\{0, (\widehat{\theta}_j - \widehat{\theta}_0)^2 - \widehat{\rho}_j^2 / n\}} + \widehat{\sigma}_{j}^2 / n$. Importantly, \citet{rothenhausler2020} derive asymptotically valid confidence intervals that take into account the uncertainty due to the selection procedure (see their Theorem 4).

Toward applying the \citet{rothenhausler2020} approach to the context of this paper, we take $\widehat{\tau}_a$ of analysis strategy \#2 to be the `base' estimator (since it does not rely on Assumption \ref{ass:MAR}), with $\widehat{\tau}_a^*$ as an alternative estimator that is, in principle, more efficient than $\widehat{\tau}_a$ if Assumption \ref{ass:MAR} does hold. Then, let $\widehat{V}$ = $n \widehat{\mbox{Var}}[\widehat{\tau}_a]$ and  $\widehat{V}^*$ = $n\widehat{\mbox{Var}}[\widehat{\tau}^*_a]$, and define $\widehat{Q} = \frac{1}{n}\sum_{i = 1}^n \{\dot{\tau}_a(O_i; \widehat{P}_o) - \dot{\tau}_a^*(O_i; \widehat{P}_o)\}^2$
as an estimator of $\mbox{Var}_{P_o}[\dot{\tau}_a(O; P_o) - \dot{\tau}^*_a(O; P_o)]$. The final analysis strategy considers the following `data-adaptive' estimator of the mean counterfactual:
\[
	\widehat{\tau}_a^\dag\ := \left\{
	\begin{array}{cl}
		\widehat{\tau}_a & \mbox{if}\ \widehat{V}\ <\ \mbox{max}\left\{n(\widehat{\tau}_a - \widehat{\tau}^*_a)^2 - \widehat{Q},\ 0\right\}\ +\ \widehat{V}^* \\
		\widehat{\tau}^*_a & \mbox{otherwise} \\
	\end{array} \right..
\]
Intuitively, in the present context, one can interpret `large' values of $(\widehat{\tau}_a - \widehat{\tau}_a^*)^2$ as indicating evidence against MAR Assumption \ref{ass:MAR} holding, so that the procedure selects $\widehat{\tau}_a$ as the estimator. Otherwise, the procedure selects $\widehat{\tau}_a^*$. 



\section{Simulations}\label{sec:simulation}
Table \ref{tab:summ} provides a summary of analysis strategies \#1--5 described in Sections \ref{analysis:1} and \ref{sec:double}, delineating them by the nature of the data that is taken to be available, and the assumptions that must hold for the corresponding estimator to be consistent. In this section, we present two simulation studies, conducted to investigate properties of the five strategies.
In the first, we demonstrate the validity of 
the double sampling approach for handling MNAR data,
verify the the robustness properties of the proposed
nonparametric influence function-based estimator
$\widehat{\tau}_a$, and
compare, under differing degrees of violation of MAR, the
bias and variance of $\widehat{\tau}_a$ to $\widehat{\tau}_a^*$ (strategies \#2 and \#3, respectively) as well as approaches that only use the initially observed incomplete data (strategy \#1).
In the second simulation study, we compare in the absence
of model misspecification the performance of strategies \#2--5, over a range of possible violations of MAR.
We also assess the coverage and length
of the proposed confidence intervals for these
estimators. 

\subsection{Robustness, bias and variance}
\label{sec:sim1}
The framing of the simulation study is, 
following our motivating study in Section \ref{sec:hypothetical},
a hypothetical study comparing two bariatric surgery
procedures on long-term weight outcomes. Specifically, we
consider a binary point exposure $A$, taking on
a value of 0 for Roux-en-Y gastric bypass (RYGB) and
1 for vertical sleeve gastrectomy
(VSG), and continuous outcome $Y$ of the proportion weight
change at three years post-surgery. For simplicity, we consider only one confounder, that being
gender, denoted $L_g$. The estimand of interest is taken
to be the ATE, $\tau_1(P_o) - \tau_0(P_o)$.

To help ground the simulation in a real-world setting, 
we used information on 5,693 patients who underwent 
either VSG or RYGB at Kaiser Permanente Washington 
between January 1, 2008, and December 31,
2010. For these patients, complete information was 
available on gender, bariatric surgery procedure, 
and weight outcomes, so that missingness in the outcome
could then be induced by a known
mechanism. We then generated 5,000 simulated datasets of size
$n = 5,693$ under each of three settings, where we varied the strength
of the violation of MAR. Specifically, we proceeded by (i) sampling
directly from the empirical distribution of $L_g$; (ii) generating
$A \mid L_g \sim \mathrm{Bernoulli}(p_1 L_g + p_0(1 - L_g))$, where
$p_0 = 0.20$ and $p_1 = 0.34$ were taken to approximately mirror their
empirical values; (iii) generating $R \mid L_g, A \sim 
\mathrm{Bernoulli}(\mathrm{expit}(\delta_0 +
  \delta_L L_g + \delta_A A + \delta_{LA} L_gA))$, where
$(\delta_0, \delta_L, \delta_A, \delta_{LA}) = (-1.39, 0.09, -0.05,
-0.35)$ and $\mathrm{expit}(x) = \exp{(x)} / (1 + \exp{(x)})$,
inducing a marginal missingness probability $P_o[R = 0] \approx 0.8$;
(iv) generating $Y \mid L_g, A, R \sim \mathcal{N}(\beta_0 + \beta_LL_g + \beta_A A +
  \beta_{RA} R A, \sigma_Y^2)$, where
$(\beta_0, \beta_L, \beta_A) = (-0.24, 0.023, 0.064)$,
$\beta_{RA} \in \{0, 0.016, 0.032\}$ and $\sigma_Y = 0.11$,
approximately mirroring the marginal empirical distribution of $Y$;
and (v) generating
$S \mid L_g, A, R \sim \mathrm{Bernoulli}((1-R)
  \mathrm{expit}{\left\{\zeta_0 + \zeta_LL_g + \zeta_A A + \zeta_{LA}
      L_g A\right\}})$, where
$(\zeta_0, \zeta_L, \zeta_A, \zeta_{LA}) = (-2.2, 0.4, 0.3, 0.25)$,
inducing a marginal double sampling probability
$P_o[S = 1 \mid R = 0] \approx 0.11$. The parameter $\beta_{RA}$
controls the degree to which the MAR assumption is violated: when
$\beta_{RA} = 0.032$, we say there is a ``large'' violation; when
$\beta_{RA} = 0.016$, there is a more ``moderate'' violation; and, when
$\beta_{RA} = 0$, then there is no violation of MAR (i.e,. Assumption \ref{ass:MAR} holds). The labels of ``moderate'' and ``large'' are admittedly somewhat subjective, but we use them as they qualitatively describe the distance between $\tau_1(P_o) - \tau_0(P_o)$, which does not assume MAR, and $\tau_1^*(P_o) - \tau_0^*(P_o)$, which does.

In all scenarios for $\beta_{RA}$, we computed the nonparametric influence
function-based estimator $\widehat{\tau}_1 - \widehat{\tau}_0$, where we plugged in the
maximum likelihood estimators of the true generating models
$\pi_a, \gamma_a, \mu_{a, R}, \mu_{a, S}$ described above, and assumed
the double sampling probabilities $\eta_{a,0}$ were known. To verify
the theoretical robustness of our influence function-based estimator,
we considered misspecifying (i) models
$(\widehat{\mu}_{a,S}, \widehat{\mu}_{a,R}, \widehat{\gamma}_a)$, (ii)
the model $\widehat{\pi}_a$, and (iii) both
$(\widehat{\mu}_{a,S}, \widehat{\mu}_{a,R}, \widehat{\gamma}_a)$ and
$\widehat{\pi}_a$. In particular,
$\widehat{\mu}_{a,S}, \widehat{\mu}_{a,R}$ were 
misspecified by omitting the main effect of $L_g$,
$\widehat{\gamma}_a$ by omitting
the main effect of $A$ and its interaction with $L_g$, and
$\widehat{\pi}_a$, quite drastically, by estimating
$P_o[A = a \mid L_g]$ using $\widehat{P}_o[A = 1 - a \mid L_g]$.

For comparison, we also computed: (1) the estimator
$\widehat{\tau}_1^* - \widehat{\tau}_0^*$, based on the influence
functions $\dot{\tau}_{a}^*$ that are efficient under MAR (analysis strategy \#3); and (2) estimators
that did not make use of the second-stage outcomes (analysis strategy \#1). We
acknowledge that there are very many approaches one might consider for
analyzing the data using only the initially observed data, but decided that a
reasonable analyst might assume MAR, and proceed by targeting $\xi_1(P_o) - \xi_0(P_o)$, where $\xi_a(P_o) = \mathbb{E}_{P_o}[\mu_{a, R}(\boldsymbol{L})]$, with an outcome
regression based estimator based on the $g$-formula (i.e., averaging
$\mu_{a, R}$ over the empirical distribution of $L_g$), an
inverse-probability weighted (IPW) estimator with
missingness-treatment weights $\pi_a\gamma_a$, as described in
\citet{ross2022reflection}, or an augmented-IPW estimator combining both
approaches as in \citet{davidian2005} and \citet{williamson2012}. We pitted each of these estimators (using the correct
models for $\mu_{a, \mathrm{MAR}}$, $\mu_{a, R}$, $\pi_a$, and
$\gamma_a$) against our influence function-based estimator in all
three scenarios.

The results of the simulation study are presented in Figure \ref{fig:res1}. The robustness of the influence-function based estimator
$\widehat{\tau}_1 - \widehat{\tau}_0$ is clearly seen, as unbiased
inference was obtained in all scenarios when all models were correctly
specified, or either
$(\widehat{\mu}_{a,S}, \widehat{\mu}_{a,R}, \widehat{\gamma}_a)$ or
$\widehat{\pi}_a$ was misspecified. When both were misspecified, some
bias was observed in all three MAR violation scenarios. The initial-sample-only
MAR-based estimators had slightly lower variance, but were
substantially biased when there was even a moderate violation of
MAR. The MAR-efficient estimator
$\widehat{\tau}_1^* - \widehat{\tau}_0^*$, as expected, had the lowest
variance of all estimators and was unbiased in the MAR scenario. When
there was a moderate or large violation of MAR, $\widehat{\tau}_1^* - \widehat{\tau}_0^*$
was biased, though less so than the initial-sample-only MAR-based estimators.

\subsection{Inference and assessment of adaptive estimator} \label{sec:sim2}

Within the same simulation framework, we also assessed the
performance of the adaptive estimator (analysis strategy \#5), and evaluated proposed confidence intervals of all the
estimators considered. For each value in a grid of MAR violation
parameters $\beta_{RA} \in [0, 0.04]$, we simulated 5,000 datasets
exactly as in the previous section. In each case, we computed both
$\widehat{\tau}_1 - \widehat{\tau}_0$ and
$\widehat{\tau}_1^* - \widehat{\tau}_0^*$, where all underlying
nuisance models were correctly specified. Based on these, 
we then also computed the adaptive estimator
$\widehat{\tau}_1^\dagger - \widehat{\tau}_0^\dagger$.

Lastly, to show that care is required when using the data to
decide between $\widehat{\tau}_1 - \widehat{\tau}_0$ and
$\widehat{\tau}_1^* - \widehat{\tau}_0^*$,
we contrasted $\widehat{\tau}_1^\dagger - \widehat{\tau}_0^\dagger$ to an ad hoc adaptive estimator (analysis strategy \#4).
For this, we first test the hypothesis that
$\tau_a(P_o) = \tau_a^*(P_o)$ by assessing the magnitude of
the difference $\widehat{\tau}_a - \widehat{\tau}_a^*$. Formally,
under appropriate conditions, Theorem A.1
implies that
$\sqrt{n}(\widehat{\tau}_a - \widehat{\tau}_a^*)
  \xrightarrow[\mathrm{MAR}]{d} \mathcal{N}(0, Q)$,
where
$Q = \mathrm{Var}_{P_o}(\dot{\tau}_a(O; P_o)) +
  \mathrm{Var}_{P_o}(\dot{\tau}_a^*(O; P_o)) - 2 \cdot
  \mathrm{Cov}_{P_o}(\dot{\tau}_a(O; P_o), \dot{\tau}_a^*(O; P_o))$.
  An ad hoc adaptive estimator is simply
  to choose $\widehat{\tau}_a$ if we reject a test of MAR 
  based on this result, i.e., if
$\sqrt{n}\left|\widehat{\tau}_a - 
\widehat{\tau}_a^*\right| \big/
\sqrt{\widehat{Q}} > z_{1 - \alpha/2}$, and otherwise choose
$\widehat{\tau}_a^*$ if we fail to reject. We computed this
estimator across all simulation settings.

To evaluate confidence intervals, we again focused on the three
parameter values $\beta_{RA} \in \{0, 0.016, 0.032\}$. In the 5,000
simulated datasets for each value, we constructed confidence intervals
for the four estimators described above. For
$\widehat{\tau}_1 - \widehat{\tau}_0$ and
$\widehat{\tau}_1^* - \widehat{\tau}_0^*$, we used influence
function-based Wald-type confidence intervals. For $\widehat{\tau}_1^\dagger - \widehat{\tau}_0^\dagger$, we constructed confidence
intervals based on Theorem 4 of \citet{rothenhausler2020}.
For the ad hoc adaptive estimator, we used the
Wald-type interval corresponding to the baseline estimator chosen
according to the hypothesis test --- a naive approach which we expect
will lead to undercoverage.

The results on the grid of $\beta_{RA}$ values are shown in Figure
\ref{fig:res2}. When $\beta_{RA} = 0$ (i.e.,
MAR holds), all estimators are unbiased,
$\widehat{\tau}_1^* - \widehat{\tau}_0^*$ is most efficient, and the
two adaptive estimators have variance somewhere between that of
$\widehat{\tau}_1^* - \widehat{\tau}_0^*$ and
$\widehat{\tau}_1 - \widehat{\tau}_0$. As $\beta_{RA}$ increases, the
bias of $\widehat{\tau}_1^* - \widehat{\tau}_0^*$, which wrongly
assumes MAR, increases roughly linearly. The two adaptive estimators
also inherit some bias due to being pulled away by
$\widehat{\tau}_1^* - \widehat{\tau}_0^*$. Interestingly, when
$\beta_{RA}$ becomes really large, indicating quite a substantial
violation of MAR, the bias of the two adaptive estimators returns back
towards zero, as it becomes increasingly rare for either of these to
select the estimator which assumes MAR.

The results on the focused set of values
$\beta_{RA} \in \{0, 0.016, 0.032\}$ are arranged in Table~\ref{tab:cov}. The confidence interval for
$\widehat{\tau}_1 - \widehat{\tau}_0$ has the appropriate coverage in
all scenarios, as does the interval for
$\widehat{\tau}_1^* - \widehat{\tau}_0^*$ when MAR holds. In the two
MNAR settings, however, $\widehat{\tau}_1^* - \widehat{\tau}_0^*$ is
biased and its confidence interval is off target. The confidence
interval for the adaptive estimator
$\widehat{\tau}_1^\dagger - \widehat{\tau}_0^\dagger$ also appears to
be valid, with perhaps a bit of undercoverage in finite samples for
moderately large values of $\beta_{RA}$. Finally, the naive confidence
intervals of the ad hoc adaptive estimator tend to be overly narrow.

\section{Data application}\label{sec:application}
In this section, we present an analysis of the proposed methods to data from an EHR-based study comparing the effect of RYGB ($A = 1$) versus VSG ($A = 0$) bariatric surgery procedures on percent weight change at three years post-surgery ($Y$). Data were obtained from three health care sites within Kaiser Permanente: Northern California, Southern California, and Washington. Namely, in line with \citet{arterburn2020}, we use data on $n = 13,514$ adult patients who underwent RYGB or VSG between January 2005 and September 2015, with complete weight data at baseline (closest measurement pre-surgery, up to 6 months) and follow-up (closest measurement within $\pm$ 90 days). See Table~\ref{tab:3} for a summary of baseline characteristics. These data comprised the ``complete-cases'' from a larger collection of 30,991 patients for which follow-up outcomes were only partially observed.

We artificially imposed missingness in the outcome $Y$ on the complete-case data according to an MAR mechanism, as well as an MNAR mechanism. To construct a realistic MAR mechanism, we modeled the probability of missingness from the original larger collection of 30,991 patients using the following baseline covariates $\boldsymbol{L}$: baseline weight, health care site, year of surgery, age, gender, race/ethnicity, number of days of health care use in 7-12 month period pre-surgery, number of days hospitalized in pre-surgery year, smoking status, Charlson/Elixhauser comorbidity score, insurance type, clinical statuses for hypertension, coronary artery disease, diabetes, dyslipedemia, retinopathy, neuropathy, and mental health disorders, and use of medicines including, insulin, ACE inhibitors, ARB, statins, other lipid lowering medications, and other antihypertensives. We regressed the indicator for observing $Y$ on $A$ and $\boldsymbol{L}$ via a SuperLearner ensemble (with library $\{\texttt{SL.glm, SL.ranger, SL.rpart}\}$) using the corresponding R package \citep{polley2019}. Next, to construct a MNAR mechanism, we augmented the fitted values $\widehat{\gamma}$ to include dependence on the outcome $Y$:

\vspace{-4mm}
{\small
\[\widehat{\gamma}\ \longmapsto\ \mathrm{expit}\left\{\left(\mathrm{logit}(\widehat{\gamma}) + \widetilde{Y}\left[0.7  + 0.8 (1 - A) - 1.2\,\mathds{1}(\text{diabetes}) + 0.6\, \mathds{1}(\text{non-commercial insurance})\right]\right)\right\},\]
}

\noindent where $\widetilde{Y}$ is standardized BMI change at 3 years. Finally, these models were used to impose missing outcomes on the sample of $n = 13,514$ patients by sampling $R$ according to a Bernoulli with probability given by the fitted values from the models. The resulting marginal probabilities of missingness were 26\% and 28\% in the MAR and MNAR settings, respectively.

In each of the missingness settings described above, we considered collecting a random subsample (i.e., those with $S = 1$) of initially missing outcomes of size 500, 1,000, and 1,500. For each of the six resulting datasets, we computed and compared point estimates and 95\% confidence intervals for analysis strategies \#1, \#2, \#3, and \#5.  For analysis strategy \#1, we targeted $\xi_1(P_o) - \xi_0(P_o)$, where $\xi_a(P_o) = \mathbb{E}_{P_o}(\mu_{a, R}(\boldsymbol{L}))$, with an augmented-IPW estimator as in \citet{davidian2005} and \citet{williamson2012}. Analysis strategies \#2, \#3, and \#5, correspond to estimators $\widehat{\tau}_1 - \widehat{\tau}_0$, $\widehat{\tau}_1^* - \widehat{\tau}_0^*$, and $\widehat{\tau}_1^{\dagger} - \widehat{\tau}_0^{\dagger}$, respectively. As a benchmark, we compare to a standard full-data augmented-IPW estimator that uses outcome data from all $n = 13,514$ patients. For all estimators, we used flexible SuperLearner ensembles for each component nuisance function, with a library of \texttt{SL.glm}, \texttt{SL.ranger}, and \texttt{SL.rpart}. 

Results are summarized in Figure~\ref{fig:res3}. When MAR holds, all estimators perform well with respect to the benchmark analysis, with $\widehat{\tau}_1^* - \widehat{\tau}_0^*$ having smallest variance, as anticipated. On the other hand, the estimators that assume MAR appear to be biased under MNAR, while the nonparametric efficient estimator $\widehat{\tau}_1 - \widehat{\tau}_0$ and adaptive estimator $\widehat{\tau}_1^{\dagger} - \widehat{\tau}_0^{\dagger}$ (which do not assume MAR) are robust to this violation of MAR. As expected, as the second stage subsample size increases, precision improves for analysis strategies \#2, \#3, and \#5, which incorporate this data. 

\section{Double sampling for arbitrary coarsening} \label{sec:missing}

In this paper, we have focused on the specific causal problem outlined
in Section \ref{sec:hypothetical}. 
That said, the nonparametric identification and estimation results
are entirely generic, and do not depend on either the data structure 
of the given problem nor the specific mean counterfactual estimand of interest. In Appendix A, we lay out the notation for arbitrary
coarsening of a given full data structure and show that under a generalization of
Assumptions \ref{ass:ident} and \ref{ass:pos}, double sampling
identifies the complete data
distribution; derive a transformation of the full
data nonparametric influence function of an arbitrary smooth 
functional that yields the observed data nonparametric influence function;
construct influence function-based estimators using sample splitting; and
characterize the asymptotic behavior of these estimators, including
multiple robustness properties.

\section{Discussion}

In summary, this paper proposes a general framework for the use of double-sampling as a means to address potentially MNAR missing data, when scientific interest lies in estimating causal ATEs. Key to the framework is a suite of novel analysis strategies that exploit data arising from the double sampling scheme, coupled with identifying assumptions that guarantee the corresponding estimators to be asymptotically normal, efficient, and robust.

Table \ref{tab:summ} emphasizes that each of the proposed estimators require Assumption \ref{ass:ident} to hold. In any given applied setting, this assumption will need to be carefully evaluated. As indicated in Section \ref{sec:identification}, depending on the context, one practical issue is that selection by the double sampling scheme may not necessarily yield complete data. In such settings, investigators will need to work through the same thought experiments that one usually would for the standard MAR assumption (such as Assumption \ref{ass:ident}) to try to understand why some individuals engage and others do not. If it is felt that engagement remains dependent on outcome status (beyond what is explained by what is known about the design and covariates ($A$, $\boldsymbol{L}$)), then sensitivity analyses or alternative identification schemes may be necessary; this is an on-going area of our work.

A second practical issue is that, even if all those selected actually engage, the data that arises from the double-sampling scheme may be subject to error or recall bias \citep{haneuse2016}. In some settings, the potential for recall bias may be mitigated through the design. In \citet{koffman2021}, for example, the outcome of interest was weight change at three years post-surgery, so the investigators timed the invitation to participants to coincide with the five-year anniversary. Additionally, following the same broad philosophy of this paper, one could directly learn about potential recall bias by including some participants for whom $R$ = 1 in the double-sampling scheme. This, in turn, would enable a comparison between information provided by the patient and what is available in the EHR. How best to do this and use the resulting information, though, are open questions.

Notwithstanding these practical issues and the fact that logistical or financial considerations may altogether preclude the use of double sampling in some settings, we believe the proposed framework presents a new option for researchers as they contend with potentially informative missing or coarsened data. Beyond those mentioned above, there are many opportunities for future work in this vein, including how best to use the available information in the EHR when allocating resources for double-sampling, as well as developing estimators for a broader set of analysis goals, such as mediation, and outcome types, such as time-to-event outcomes.

\section*{Acknowledgements}
The authors gratefully acknowledge support from NIH grant R01 DK128150. 

\bibliographystyle{chicago}
\bibliography{bibliography}

\newcommand{\noop}[1]{}
\begin{thebibliography}{}

\bibitem[\protect\citeauthoryear{Amorim, Tao, Lotspeich, Shaw, Lumley, and
  Shepherd}{Amorim et~al.}{2021}]{amorim2021}
Amorim, G., R.~Tao, S.~Lotspeich, P.~A. Shaw, T.~Lumley, and B.~E. Shepherd
  (2021).
\newblock Two-phase sampling designs for data validation in settings with
  covariate measurement error and continuous outcome.
\newblock {\em JRSS-A\/}~{\em 184\/}(4), 1368--1389.

\bibitem[\protect\citeauthoryear{Arterburn, Johnson, Coleman, Herrinton,
  Courcoulas, Fisher, Li, Theis, Liu, Fraser, et~al.}{Arterburn
  et~al.}{2021}]{arterburn2020}
Arterburn, D.~E., E.~Johnson, K.~J. Coleman, L.~J. Herrinton, A.~P. Courcoulas,
  D.~Fisher, R.~A. Li, M.~K. Theis, L.~Liu, J.~R. Fraser, et~al. (2021).
\newblock Weight outcomes of sleeve gastrectomy and gastric bypass compared to
  nonsurgical treatment.
\newblock {\em Annals of Surgery\/}~{\em 274\/}(6), e1269--e1276.

\bibitem[\protect\citeauthoryear{Bickel, Klaassen, Ritov, and Wellner}{Bickel
  et~al.}{1993}]{bkrw1993}
Bickel, P., C.~Klaassen, Y.~Ritov, and J.~Wellner (1993).
\newblock {\em Efficient and {A}daptive {E}stimation for {S}emiparametric
  {M}odels}.
\newblock Johns Hopkins University Press Baltimore.

\bibitem[\protect\citeauthoryear{Borgan, Breslow, Chatterjee, Gail, Scott, and
  Wild}{Borgan et~al.}{2018}]{borgan2018}
Borgan, {\O}., N.~Breslow, N.~Chatterjee, M.~H. Gail, A.~Scott, and C.~J. Wild
  (2018).
\newblock {\em Handbook of Statistical Methods for Case-Control Studies}.
\newblock CRC Press.

\bibitem[\protect\citeauthoryear{Carroll, Ruppert, Stefanski, and
  Crainiceanu}{Carroll et~al.}{2006}]{carroll2006}
Carroll, R.~J., D.~Ruppert, L.~A. Stefanski, and C.~M. Crainiceanu (2006).
\newblock {\em Measurement Error in Nonlinear Models: A Modern Perspective}.
\newblock Chapman and Hall/CRC.

\bibitem[\protect\citeauthoryear{Chernozhukov, Chetverikov, Demirer, Duflo,
  Hansen, Newey, and Robins}{Chernozhukov et~al.}{2018}]{chernozhukov2018}
Chernozhukov, V., D.~Chetverikov, M.~Demirer, E.~Duflo, C.~Hansen, W.~Newey,
  and J.~Robins (2018).
\newblock Double/debiased machine learning for treatment and structural
  parameters.
\newblock {\em The Econometrics Journal\/}~{\em 21\/}(1), C1--C68.

\bibitem[\protect\citeauthoryear{Davidian, Tsiatis, and Leon}{Davidian
  et~al.}{2005}]{davidian2005}
Davidian, M., A.~A. Tsiatis, and S.~Leon (2005).
\newblock Semiparametric estimation of treatment effect in a pretest--posttest
  study with missing data.
\newblock {\em Statistical Science\/}~{\em 20\/}(3), 261.

\bibitem[\protect\citeauthoryear{Frangakis and Rubin}{Frangakis and
  Rubin}{2001}]{frangakis2001}
Frangakis, C.~E. and D.~B. Rubin (2001).
\newblock Addressing an idiosyncrasy in estimating survival curves using double
  sampling in the presence of self-selected right censoring.
\newblock {\em Biometrics\/}~{\em 57\/}(2), 333--342.

\bibitem[\protect\citeauthoryear{Guan, Leung, and Qin}{Guan
  et~al.}{2018}]{guan2018}
Guan, Z., D.~H. Leung, and J.~Qin (2018).
\newblock Semiparametric maximum likelihood inference for nonignorable
  nonresponse with callbacks.
\newblock {\em Scan. Journal of Statistics\/}~{\em 45\/}(4), 962--984.

\bibitem[\protect\citeauthoryear{Hahn}{Hahn}{1998}]{hahn1998}
Hahn, J. (1998).
\newblock On the role of the propensity score in efficient semiparametric
  estimation of average treatment effects.
\newblock {\em Econometrica\/}~{\em 66\/}(2), 315--331.

\bibitem[\protect\citeauthoryear{Haneuse, Bogart, Jazic, Westbrook, Boudreau,
  Theis, Simon, and Arterburn}{Haneuse et~al.}{2016}]{haneuse2016}
Haneuse, S., A.~Bogart, I.~Jazic, E.~O. Westbrook, D.~Boudreau, M.~K. Theis,
  G.~E. Simon, and D.~Arterburn (2016).
\newblock Learning about missing data mechanisms in electronic health
  records-based research: a survey-based approach.
\newblock {\em Epidemiology\/}~{\em 27\/}(1), 82.

\bibitem[\protect\citeauthoryear{Haneuse and Daniels}{Haneuse and
  Daniels}{2016}]{haneuse2016a}
Haneuse, S. and M.~Daniels (2016).
\newblock A general framework for considering selection bias in {EHR}-based
  studies: what data are observed and why?
\newblock {\em eGEMs\/}~{\em 4\/}(1), 1203.

\bibitem[\protect\citeauthoryear{Haneuse and Shortreed}{Haneuse and
  Shortreed}{2017}]{haneuse2017}
Haneuse, S. J.~A. and S.~M. Shortreed (2017).
\newblock On the use of electronic health records.
\newblock In {\em Methods in Comparative Effectiveness Research}, pp.\
  469--502. Chapman and Hall/CRC.

\bibitem[\protect\citeauthoryear{Hansen and Hurwitz}{Hansen and
  Hurwitz}{1946}]{hansen1946}
Hansen, M.~H. and W.~N. Hurwitz (1946).
\newblock The problem of non-response in sample surveys.
\newblock {\em Journal of the American Statistical Association\/}~{\em
  41\/}(236), 517--529.

\bibitem[\protect\citeauthoryear{Heitjan and Rubin}{Heitjan and
  Rubin}{1991}]{heitjan1991}
Heitjan, D.~F. and D.~B. Rubin (1991).
\newblock Ignorability and coarse data.
\newblock {\em The {A}nnals of {S}tatistics\/}~{\em 19\/}(4), 2244--2253.

\bibitem[\protect\citeauthoryear{Hernan and Robins}{Hernan and
  Robins}{2019}]{hernan2019}
Hernan, M.~A. and J.~M. Robins (2019).
\newblock {\em Causal inference}.
\newblock CRC Boca Raton, forthcoming.

\bibitem[\protect\citeauthoryear{Kennedy}{Kennedy}{2020}]{kennedy2020}
Kennedy, E.~H. (2020).
\newblock Efficient nonparametric causal inference with missing exposure
  information.
\newblock {\em The International Journal of Biostatistics\/}~{\em 16\/}(1),
  20190087.

\bibitem[\protect\citeauthoryear{Kennedy, Balakrishnan, and G’Sell}{Kennedy
  et~al.}{2020}]{kennedy2020b}
Kennedy, E.~H., S.~Balakrishnan, and M.~G’Sell (2020).
\newblock Sharp instruments for classifying compliers and generalizing causal
  effects.
\newblock {\em The Annals of Statistics\/}~{\em 48\/}(4), 2008--2030.

\bibitem[\protect\citeauthoryear{Koffman, Levis, Arterburn, Coleman, Herrinton,
  Cooper, Ewing, Fischer, Fraser, Johnson, et~al.}{Koffman
  et~al.}{2021}]{koffman2021}
Koffman, L., A.~W. Levis, D.~Arterburn, K.~J. Coleman, L.~J. Herrinton,
  J.~Cooper, J.~Ewing, H.~Fischer, J.~R. Fraser, E.~Johnson, et~al. (2021).
\newblock Investigating bias from missing data in an electronic health
  records-based study of weight loss after bariatric surgery.
\newblock {\em Obesity Surgery\/}~{\em 31\/}(5), 2125--2135.

\bibitem[\protect\citeauthoryear{Malinsky, Shpitser, and
  Tchetgen~Tchetgen}{Malinsky et~al.}{2020}]{malinsky2020}
Malinsky, D., I.~Shpitser, and E.~J. Tchetgen~Tchetgen (2020).
\newblock Semiparametric inference for non-monotone mnar data: the no
  self-censoring model.
\newblock {\em JASA\/}~{\em 0\/}(0), 1--22.

\bibitem[\protect\citeauthoryear{Manski}{Manski}{1990}]{manski1990}
Manski, C.~F. (1990).
\newblock Nonparametric bounds on treatment effects.
\newblock {\em The American Economic Review\/}~{\em 80\/}(2), 319--323.

\bibitem[\protect\citeauthoryear{Miao, Li, and Sun}{Miao
  et~al.}{2021}]{miao2021}
Miao, W., X.~Li, and B.~Sun (2021).
\newblock A stableness of resistance model for nonresponse adjustment with
  callback data.

\bibitem[\protect\citeauthoryear{Miao and Tchetgen~Tchetgen}{Miao and
  Tchetgen~Tchetgen}{2016}]{miao2016}
Miao, W. and E.~J. Tchetgen~Tchetgen (2016).
\newblock On varieties of doubly robust estimators under missingness not at
  random with a shadow variable.
\newblock {\em Biometrika\/}~{\em 103\/}(2), 475--482.

\bibitem[\protect\citeauthoryear{Pfanzagl}{Pfanzagl}{2012}]{pfanzagl2012}
Pfanzagl, J. (2012).
\newblock {\em Contributions to a general asymptotic statistical theory},
  Volume~13.
\newblock Springer Science \& Business Media.

\bibitem[\protect\citeauthoryear{Polley, LeDell, Kennedy, and {van der
  Laan}}{Polley et~al.}{2019}]{polley2019}
Polley, E., E.~LeDell, C.~Kennedy, and M.~{van der Laan} (2019).
\newblock {\em SuperLearner: Super Learner Prediction}.
\newblock The Comprehensive R Archive Network (CRAN).
\newblock R package version 2.0-26.

\bibitem[\protect\citeauthoryear{Robins}{Robins}{1986}]{robins1986}
Robins, J. (1986).
\newblock A new approach to causal inference in mortality studies with a
  sustained exposure period—application to control of the healthy worker
  survivor effect.
\newblock {\em Mathematical {M}odelling\/}~{\em 7\/}(9-12), 1393--1512.

\bibitem[\protect\citeauthoryear{Robins, Li, Tchetgen, and van~der
  Vaart}{Robins et~al.}{2008}]{robins2008}
Robins, J., L.~Li, E.~Tchetgen, and A.~van~der Vaart (2008).
\newblock Higher order influence functions and minimax estimation of nonlinear
  functionals.
\newblock In {\em Probability and {S}tatistics: {E}ssays in {H}onor of {D}avid
  {A}. {F}reedman}, pp.\  335--421. Institute of Mathematical Statistics.

\bibitem[\protect\citeauthoryear{Robins, Rotnitzky, and Scharfstein}{Robins
  et~al.}{2000}]{robins2000}
Robins, J.~M., A.~Rotnitzky, and D.~O. Scharfstein (2000).
\newblock Sensitivity analysis for selection bias and unmeasured confounding in
  missing data and causal inference models.
\newblock In {\em Statistical {M}odels in {E}pidemiology, the {E}nvironment,
  and {C}linical {T}rials}, pp.\  1--94. Springer.

\bibitem[\protect\citeauthoryear{Robins, Rotnitzky, and Zhao}{Robins
  et~al.}{1994}]{robins1994}
Robins, J.~M., A.~Rotnitzky, and L.~P. Zhao (1994).
\newblock Estimation of regression coefficients when some regressors are not
  always observed.
\newblock {\em JASA\/}~{\em 89\/}(427), 846--866.

\bibitem[\protect\citeauthoryear{Ross, Breskin, Breger, and Westreich}{Ross
  et~al.}{2022}]{ross2022reflection}
Ross, R.~K., A.~Breskin, T.~L. Breger, and D.~Westreich (2022).
\newblock Reflection on modern methods: combining weights for confounding and
  missing data.
\newblock {\em IJE\/}~{\em 51\/}(2), 679--684.

\bibitem[\protect\citeauthoryear{Rothenhäusler}{Rothenhäusler}{2020}]{rothenhausler2020}
Rothenhäusler, D. (2020).
\newblock Model selection for estimation of causal parameters.

\bibitem[\protect\citeauthoryear{Rotnitzky and Smucler}{Rotnitzky and
  Smucler}{2020}]{rotnitzky2020b}
Rotnitzky, A. and E.~Smucler (2020).
\newblock Efficient adjustment sets for population average causal treatment
  effect estimation in graphical models.
\newblock {\em Journal of {M}achine {L}earning {R}esearch\/}~{\em 21}, 1--86.

\bibitem[\protect\citeauthoryear{Rotnitzky, Smucler, and Robins}{Rotnitzky
  et~al.}{2020}]{rotnitzky2020}
Rotnitzky, A., E.~Smucler, and J.~Robins (2020).
\newblock Characterization of parameters with a mixed bias property.
\newblock {\em Biometrika\/}~{\em 106\/}(4), 875--888.

\bibitem[\protect\citeauthoryear{Rubin}{Rubin}{1976}]{rubin1976}
Rubin, D.~B. (1976).
\newblock Inference and missing data.
\newblock {\em Biometrika\/}~{\em 63\/}(3), 581--592.

\bibitem[\protect\citeauthoryear{Rubin}{Rubin}{2004}]{rubin2004}
Rubin, D.~B. (2004).
\newblock {\em Multiple Imputation for Nonresponse in Surveys}.
\newblock John Wiley \& Sons.

\bibitem[\protect\citeauthoryear{Schick}{Schick}{1986}]{schick1986}
Schick, A. (1986).
\newblock On asymptotically efficient estimation in semiparametric models.
\newblock {\em The Annals of Statistics\/}~{\em 14}, 1139--1151.

\bibitem[\protect\citeauthoryear{Seaman and White}{Seaman and
  White}{2013}]{seaman2013}
Seaman, S.~R. and I.~R. White (2013).
\newblock Review of inverse probability weighting for dealing with missing
  data.
\newblock {\em Statistical methods in medical research\/}~{\em 22\/}(3),
  278--295.

\bibitem[\protect\citeauthoryear{Sun, Liu, Miao, Wirth, Robins, and
  Tchetgen}{Sun et~al.}{2018}]{sun2018b}
Sun, B., L.~Liu, W.~Miao, K.~Wirth, J.~Robins, and E.~J.~T. Tchetgen (2018).
\newblock Semiparametric estimation with data missing not at random using an
  instrumental variable.
\newblock {\em Statistica Sinica\/}~{\em 28\/}(4), 1965.

\bibitem[\protect\citeauthoryear{Tsiatis}{Tsiatis}{2007}]{tsiatis2007}
Tsiatis, A. (2007).
\newblock {\em Semiparametric {T}heory and {M}issing {D}ata}.
\newblock Springer Sc. \& Bus. Media.

\bibitem[\protect\citeauthoryear{Van~der Vaart}{Van~der
  Vaart}{2000}]{vandervaart2000}
Van~der Vaart, A.~W. (2000).
\newblock {\em Asymptotic Statistics}, Volume~3.
\newblock Cambridge University Press.

\bibitem[\protect\citeauthoryear{Weiskopf, Cohen, Hannan, Jarmon, and
  Dorr}{Weiskopf et~al.}{2019}]{weiskopf2019}
Weiskopf, N.~G., A.~M. Cohen, J.~Hannan, T.~Jarmon, and D.~A. Dorr (2019).
\newblock Towards augmenting structured {EHR} data: a comparison of manual
  chart review and patient self-report.
\newblock In {\em AMIA Annual Symposium Proceedings}, Volume 2019, pp.\  903.

\bibitem[\protect\citeauthoryear{Williamson, Forbes, and Wolfe}{Williamson
  et~al.}{2012}]{williamson2012}
Williamson, E., A.~Forbes, and R.~Wolfe (2012).
\newblock Doubly robust estimators of causal exposure effects with missing data
  in the outcome, exposure or a confounder.
\newblock {\em Statistics in Medicine\/}~{\em 31\/}(30), 4382--4400.

\end{thebibliography}

\newpage

\begin{table}[ht]
\caption{Summary of Analysis Strategies \#1--5} \label{tab:summ}
\centering
\begin{tabular}{ccccccc}
Strategy && Data && Assumptions && Estimator \\
\cline{1-1}\cline{3-3}\cline{5-5} \cline{7-7}
\#1	&& $(\boldsymbol{L}, A, R, RY)$		&& (\ref{ass:MAR}) && Standard/ad-hoc\\
\#2	&& $(\boldsymbol{L}, A, R, S, (R+S)Y)$	&& (\ref{ass:ident}) \& (\ref{ass:pos}) && IF-based, $\widehat{\tau}_a$ \\
\#3	&& $(\boldsymbol{L}, A, R, S, (R+S)Y)$	&& (\ref{ass:MAR}) \& (\ref{ass:ident}) && IF-based, $\widehat{\tau}^*_a$ \\
\#4	&& $(\boldsymbol{L}, A, R, S, (R+S)Y)$	&& (\ref{ass:MAR}) \& (\ref{ass:ident}) or (\ref{ass:ident}) \& (\ref{ass:pos}) && $\widehat{\tau}_a$ or $\widehat{\tau}^*_a$ \\
\#5	&& $(\boldsymbol{L}, A, R, S, (R+S)Y)$	&& (\ref{ass:MAR}) \& (\ref{ass:ident}) or (\ref{ass:ident}) \& (\ref{ass:pos}) && $\widehat{\tau}_a^\dag$ \\
\end{tabular}
\end{table}

\begin{table}[ht]
  \caption{Coverage and other statistics of influence function-based
    and adaptive estimators} \label{tab:cov}
  \begin{adjustbox}{width=0.9\columnwidth,center}
    \centering
    \begin{threeparttable}
      \begin{tabular}{l|rrrrrr}
        \hline
        & & \% Bias & Rel. variance & Rel. MSE & \% Coverage & Average length \\ 
        \hline
        $\beta_{RA} = 0$ & $\widehat{\tau}_1 - \widehat{\tau}_0$ & -0.02 & 1.00 & 1.00 & 94.5 & 0.036 \\ 
        & $\widehat{\tau}_1^* - \widehat{\tau}_0^*$ & 0.06 & 0.50 & 0.50 & 95.4 & 0.026 \\ 
        & $\widehat{\tau}_1^\dagger - \widehat{\tau}_0^\dagger$ & 0.04 & 0.80 & 0.80 & 95.8 & 0.035 \\ 
        & Ad hoc adaptive & 0.08 & 0.64 & 0.64 & 93.8 & 0.026 \\
        \hline
        $\beta_{RA} = 0.016$ & $\widehat{\tau}_1 - \widehat{\tau}_0$ & -0.01 & 1.00 & 1.00 & 94.5 & 0.036 \\ 
        & $\widehat{\tau}_1^* - \widehat{\tau}_0^*$ & 10.26 & 0.50 & 1.06 & 81.8 & 0.026 \\ 
        & $\widehat{\tau}_1^\dagger - \widehat{\tau}_0^\dagger$ & 2.87 & 1.08 & 1.12 & 94.1 & 0.036 \\ 
        & Ad hoc adaptive & 5.67 & 1.01 & 1.18 & 82.6 & 0.028 \\
        \hline
        $\beta_{RA} = 0.032$ & $\widehat{\tau}_1 - \widehat{\tau}_0$ & -0.03 & 1.00 & 1.00 & 94.5 & 0.036 \\ 
        & $\widehat{\tau}_1^* - \widehat{\tau}_0^*$ & 19.57 & 0.51 & 2.72 & 45.3 & 0.026 \\ 
        & $\widehat{\tau}_1^\dagger - \widehat{\tau}_0^\dagger$ & 1.69 & 1.26 & 1.28 & 91.3 & 0.037 \\ 
        & Ad hoc adaptive & 4.56 & 1.56 & 1.68 & 73.6 & 0.031 \\ 
        \hline
      \end{tabular}
      \begin{tablenotes}
      \item MSE, mean squared error; \% Bias,
        $100 \times \frac{\mathrm{mean} - (\tau_1(P_o) -
          \tau_0(P_o))}{\tau_1(P_o) - \tau_0(P_o)}$; Rel. variance,
        empirical variance of estimator divided by that of
        $\widehat{\tau}_1 - \widehat{\tau}_0$; Rel. MSE,
        empirical MSE of estimator divided by that of
        $\widehat{\tau}_1 - \widehat{\tau}_0$.
      \end{tablenotes}
    \end{threeparttable}
  \end{adjustbox}
\end{table}

\begin{table}[ht]
\caption{Baseline characteristics of patients who underwent bariatric surgery, 2005–2015} \label{tab:3}
\centering
\resizebox{5.0in}{!}{%
\begin{tabular}{l|l}
\hline
  & \textbf{Surgery patients}\\
\hline
Number & 13514\\
\hline
Sleeve Gastrectomy [surgery type] (\%) & 4659 (34.5)\\
\hline
Health care site (\%) & \\
\hline
\ \ \ \ Washington & 606 (4.5)\\
\hline
\ \ \ \ Northern California & 3484 (25.8)\\
\hline
\ \ \ \ Southern California & 9424 (69.7)\\
\hline
Year of surgery (mean (SD)) & 2009.64 (1.95)\\
\hline
Years of age at surgery (mean (SD)) & 46.29 (11.04)\\
\hline
Age categories (\%) & \\
\hline
\ \ \ \ 1: $\mathrm{Age} < 45$ & 5939 (43.9)\\
\hline
\ \ \ \ 2: $45 \leq \mathrm{Age} < 65$ & 7019 (51.9)\\
\hline
\ \ \ \ 3: $\mathrm{Age} \geq 65$ & 556 (4.1)\\
\hline
Male [gender] (\%) & 2233 (16.5)\\
\hline
Race/ethnicity (\%) & \\
\hline
\ \ \ \ Black & 2471 (18.3)\\
\hline
\ \ \ \ Hispanic & 4167 (30.8)\\
\hline
\ \ \ \ Unknown/Other & 469 (3.5)\\
\hline
\ \ \ \ White & 6407 (47.4)\\
\hline
Days of health care use 7-12 months pre-surgery (mean (SD)) & 9.24 (7.36)\\
\hline
Insulin use (\%) & 1651 (12.2)\\
\hline
Charlson/Elixhauser comorbidity score (\%) & \\
\hline
\ \ \ \ -1 & 2667 (19.7)\\
\hline
\ \ \ \ 0 & 5147 (38.1)\\
\hline
\ \ \ \ 1 & 3249 (24.0)\\
\hline
\ \ \ \ 2 & 2451 (18.1)\\
\hline
Hypertension diagnosis (\%) & 8063 (59.7)\\
\hline
ACE inhibitor use (\%) & 3437 (25.4)\\
\hline
ARB use (\%) & 1199 (8.9)\\
\hline
Other antihypertensive medication (\%) & 5801 (42.9)\\
\hline
Insurance type (\%) & \\
\hline
\ \ \ \ Commercial & 12096 (89.5)\\
\hline
\ \ \ \ Medicaid & 488 (3.6)\\
\hline
\ \ \ \ Medicare & 930 (6.9)\\
\hline
Diabetes status (\%) & 4956 (36.7)\\
\hline
Days hospitalized in year pre-surgery (mean (SD)) & 0.34 (1.81)\\
\hline
Dyslipidemia diagnosis (\%) & 6474 (47.9)\\
\hline
Statin use (\%) & 3667 (27.1)\\
\hline
Other lipid-lowering medication (\%) & 422 (3.1)\\
\hline
Smoking status (\%) & \\
\hline
\ \ \ \ Ever, Self-Report & 4684 (34.7)\\
\hline
\ \ \ \ Never, Self-Report & 7468 (55.3)\\
\hline
\ \ \ \ No Self-Report & 1362 (10.1)\\
\hline
Coronary artery disease (\%) & 355 (2.6)\\
\hline
Mental health diagnoses (\%) & \\
\hline
\ \ \ \ Mild-Moderate Anxiety/Depression & 5822 (43.1)\\
\hline
\ \ \ \ None & 6096 (45.1)\\
\hline
\ \ \ \ Other & 1596 (11.8)\\
\hline
Retinopathy (\%) & 615 ( 4.6)\\
\hline
Neuropathy (\%) & 945 ( 7.0)\\
\hline
Baseline BMI (mean (SD)) & 44.55 (7.12)\\
\hline
\end{tabular}%
}
\end{table}

\begin{figure}[p]
  \centering
  \includegraphics[width = 0.70\linewidth]{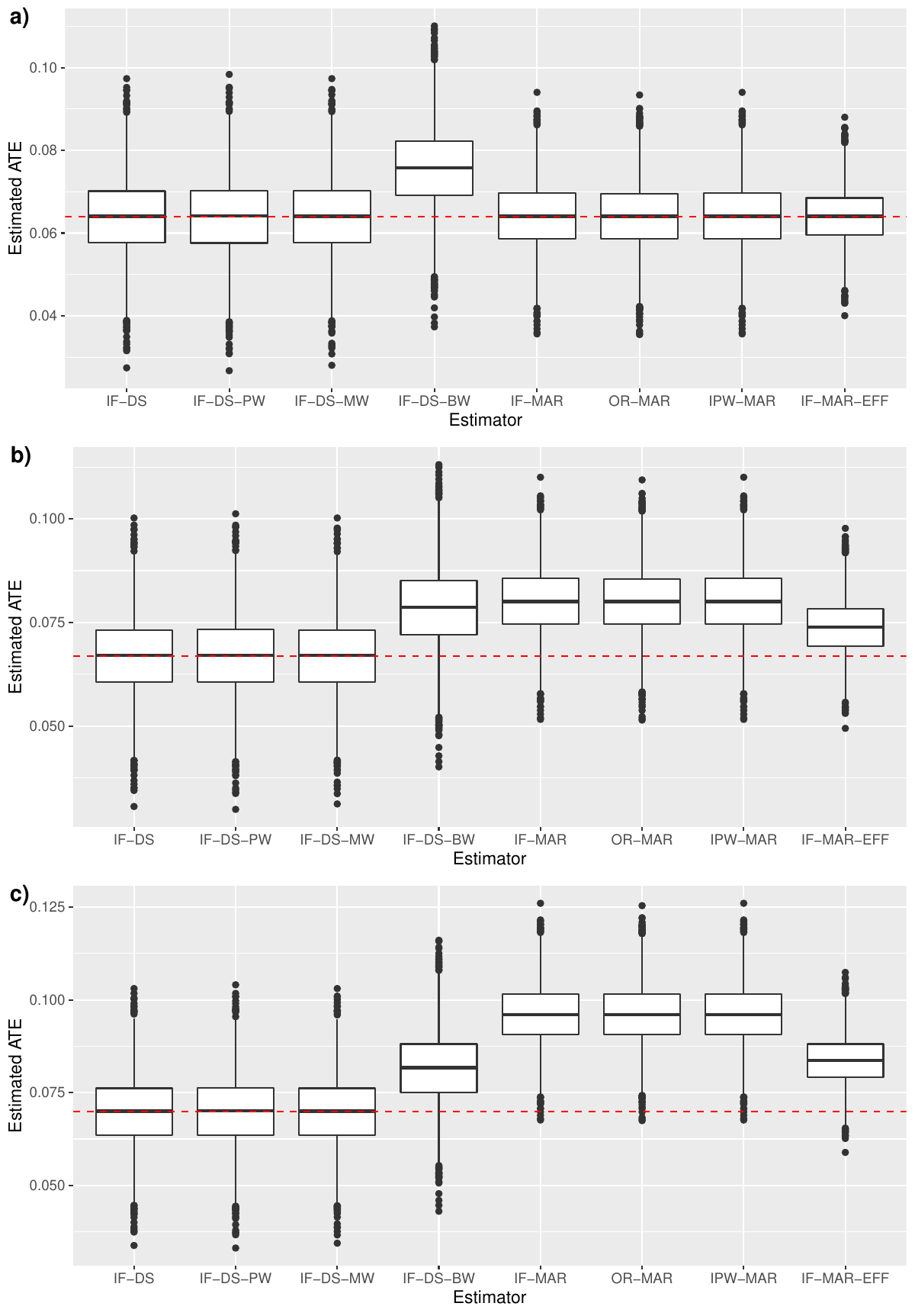}
  \caption{Simulation results for experiments of Section \ref{sec:sim1}, with a) no violation of MAR ($\beta_{RA} = 0$); b) a moderate violation of MAR ($\beta_{RA} = 0.016$); and c) a large violation of MAR
    ($\beta_{RA} = 0.032$). IF-DS = influence-function based estimator
    using double sampling, $\widehat{\tau}_1 - \widehat{\tau}_0$;
    IF-DS-PW = IF-DS but with $\widehat{\pi}_a$ misspecified; IF-DS-MW
    = IF-DS but with
    $(\widehat{\mu}_{a,S}, \widehat{\mu}_{a,R}, \widehat{\gamma}_a)$
    misspecified; IF-DS-BW = IF-DS but with both
    $(\widehat{\mu}_{a,S}, \widehat{\mu}_{a,R}, \widehat{\gamma}_a)$
    and $\widehat{\pi}_a$ misspecified; IF-MAR = augmented IPW-based
    estimator assuming MAR, $\widehat{\xi}_1 - \widehat{\xi}_0$;
    OR-MAR = outcome regression-based estimator assuming MAR; IPW-MAR
    = IPW-based estimator assuming MAR; IF-MAR-EFF = semiparametric
    efficient estimator under MAR,
    $\widehat{\tau}_1^* - \widehat{\tau}_0^*$. ATE = average treatment
    effect; the red dashed line indicates the true
    ATE.} \label{fig:res1}
\end{figure}

\begin{figure}[p]
  \centering
  \includegraphics[width = 0.70\linewidth]{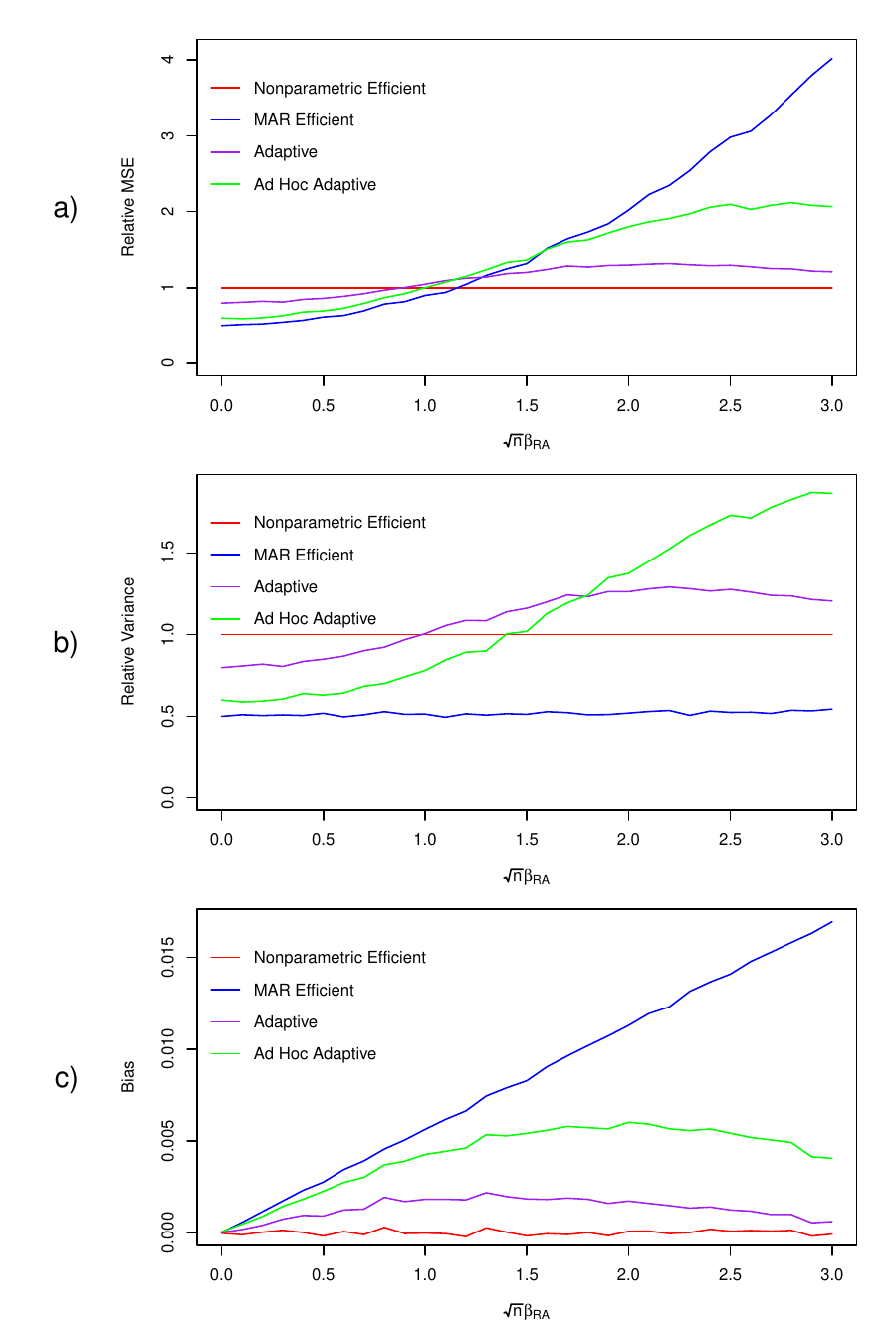}
  \caption{Simulation results for experiments of Section \ref{sec:sim2}, regarding
    $\widehat{\tau}_1 - \widehat{\tau}_0$, MAR semiparametric
    efficient estimator $\widehat{\tau}_1^* - \widehat{\tau}_0^*$,
    the adaptive estimator of \citet{rothenhausler2020}, 
    $\widehat{\tau}_1^{\dagger} - \widehat{\tau}_0^{\dagger}$, and
    the ad hoc approach described in Section \ref{sec:sim1}. Subplot a) shows the
    the empirical mean squared error (MSE) of each approach, divided
    by the MSE of the nonparametric efficient
    estimator. Subplot b) shows the empirical bias of each approach. Subplot c) shows the empirical variance of each approach, divided by the variance
    of the nonparametric efficient estimator.}  \label{fig:res2}
\end{figure}

\begin{figure}[p]
  \centering
  \includegraphics[width = 0.60\linewidth]{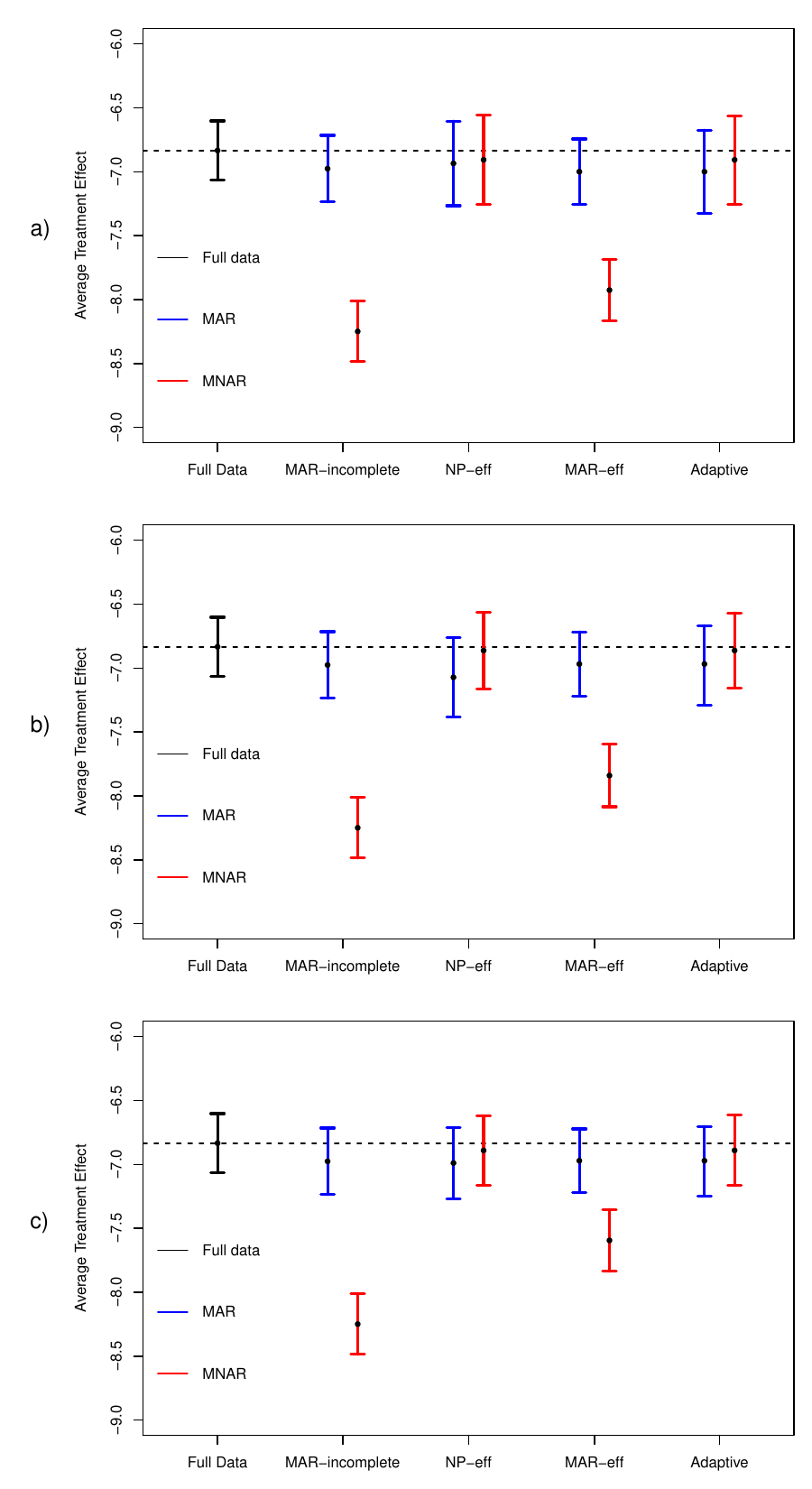}
  \caption{Results for data application in Section~\ref{sec:application}, showing the estimated effect on percent total weight change at 3 years. Subplots a), b), and c) correspond to follow-up subsamples of size 500, 1,000, and 1,500 individuals. Blue and red lines refer to estimated 95\% confidence intervals when outcomes are initially MAR, and MNAR, respectively. Point estimates are marked with black dots. MAR-incomplete = augmented-IPW estimator with incomplete data only, assuming MAR; NP-eff = influence-function based estimator using double sampling, $\widehat{\tau}_1 - \widehat{\tau}_0$; MAR-eff = semiparametric efficient estimator under MAR, $\widehat{\tau}_1^* - \widehat{\tau}_0^*$; Adaptive = adaptive estimator of \citet{rothenhausler2020}, $\widehat{\tau}_1^{\dagger} - \widehat{\tau}_0^{\dagger}$. The black dashed line represents the point estimate for the benchmark complete-case analysis.}  \label{fig:res3}
\end{figure}

\newpage

\begin{appendices}

\section{General Coarsening Framework and Nonparametric Results} \label{app:A}

\subsection{Coarsened data and nonparametric identification}
Suppose the desired \textit{complete} data for a given problem are 
the random vector $\boldsymbol{X} \sim P_{\boldsymbol{X}}^*$. That is, with
$\boldsymbol{X}$ observed on every subject in a random sample,
a parameter of interest, say $\chi(P_{\boldsymbol{X}}^*)$,
could be estimated consistently. Suppose, however, that 
the \textit{initially observed} data consists only of
$(C, \sigma_C(\boldsymbol{X}))$, where $C \in \mathbb{N}$ is a coarsening
random variable, and $\sigma_C(\boldsymbol{X})$ is a coarsened version of the complete data: $\sigma_k$ is some (typically
many-to-one) function for every possible value $k$ of $C$. As in
\citet{tsiatis2007}, we will write $C = \infty$ to denote that the
complete data are observed, i.e., there is no coarsening.  We will further
assume that there exist functions $\overline{\sigma}_k$ for every $k$,
such that $(\sigma_k, \overline{\sigma}_k)$ is injective; that is,
there exist functions $h_k$ with
$h_k(\sigma_k(\boldsymbol{X}), \overline{\sigma}_k(\boldsymbol{X})) =
\boldsymbol{X}$. In our previous observational example from Section
\ref{sec:hypothetical}, 
$C = R \cdot \infty$, where $0 \cdot \infty = 0$, and $\sigma_0(\boldsymbol{X}) = (\boldsymbol{L}, A)$, $\overline{\sigma}_0(\boldsymbol{X}) = Y$.

We now suppose that a subsample is intensively followed up, and
the initially unobserved data 
$\overline{\sigma}_C(\boldsymbol{X})$ are obtained on
some subjects. Let $S \in \{0,1\}$ indicate successful 
follow-up in the subsample when $S = 1$. 
The \textit{full} data are $(C, \sigma_C(\boldsymbol{X}), S,
\overline{\sigma}_C(\boldsymbol{X}))
\sim P^*$, and the \textit{final observed} data are independent and identically
distributed copies of
$O = (C, \sigma_C(\boldsymbol{X}), S, S \cdot
  \overline{\sigma}_C(\boldsymbol{X})) \sim P$. Here, as before, the observed data probability
distribution $P$ and the complete data distribution $P_{\boldsymbol{X}}^*$ are induced by $P^*$. Henceforth,
we suppose that the data at hand are a random sample
$O_1, \ldots, O_n \overset{\mathrm{iid}}{\sim} P$.

Let $p = \frac{d P}{d\mu}$ and $p^* = \frac{d P^*}{d\mu}$ denote the
densities for $P$ and $P^*$, respectively, both with respect to some
dominating measure $\mu$. The density of the full data
distribution can be factored via
    $p^*(C, \sigma_C(\boldsymbol{X}), S,
    \overline{\sigma}_C(\boldsymbol{X})) = p(C, \sigma_C(\boldsymbol{X}), S)\times \prod_{k}
    p^*(\overline{\sigma}_k(\boldsymbol{X}) \mid C = k,
    \sigma_k(\boldsymbol{X}), S)^{\mathds{1}(C = k)}$,
whereas the density of the observed data can be factored via
    $p(O)
    =
    p(C, \sigma_C(\boldsymbol{X}), S)
     \times \prod_{k}
    p(\overline{\sigma}_k(\boldsymbol{X}) \mid
    C = k, \sigma_k(\boldsymbol{X}), S = 1)^{\mathds{1}(S = 1, C
      = k)}$.
As the conditioning event $S = 0$ is possible in the full data but
not in the observed data (i.e., appears in expression
for $p^*$ but not for $p$), $P^*$ will not be
identified from the observed data distribution $P$ unless further
assumptions are made. That said, analysis of the components of the
full data density $p^*$ that are not present in $p$ motivates the
following conditions, which generalize Assumptions \ref{ass:ident} and \ref{ass:pos}.
\begin{assumption}[No informative second-stage selection, general version] \label{ass:identc}
For all $k \neq \infty$, 
  $S \independent \overline{\sigma}_k(\boldsymbol{X}) \mid C = k,
  \sigma_k(\boldsymbol{X})$.
\end{assumption}
\begin{assumption}[Positivity of second-stage sampling probabilities, general version] \label{ass:posc} 
  For some $\epsilon > 0$,
  $P\left[P[S = 1 \mid  C = k,
    \sigma_k(\boldsymbol{X})] \geq \epsilon\right] = 1$,
  for all 
  $k \neq \infty$.
\end{assumption}
By the following result, a generalization of Proposition \ref{prop:ident}, Assumptions \ref{ass:identc} and \ref{ass:posc} are sufficient to identify the full
data distribution $P^*$.

\begin{proposition}\label{prop:identc}
  Assumptions \ref{ass:identc} and \ref{ass:posc} are sufficient to
  identify the full data distribution $P^*$ from the observed data
  distribution $P$.
\end{proposition}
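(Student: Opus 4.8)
The plan is to mirror the proof of Proposition~\ref{prop:ident} exactly, now working factor-by-factor in the coarsening index $k$. The two density factorizations displayed immediately before the statement already isolate the obstruction: both $p^*$ and $p$ share the common factor $p(C, \sigma_C(\boldsymbol{X}), S)$, so identifying $P^*$ reduces to showing that each conditional factor $p^*(\overline{\sigma}_k(\boldsymbol{X}) \mid C = k, \sigma_k(\boldsymbol{X}), S)$ appearing in the complete-data density is a function of the observed-data law $P$ alone. The only values of $S$ that cause trouble are $S = 0$, since for $S = 1$ the corresponding factor already appears in the factorization of $p(O)$.

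First I would dispose of the case $k = \infty$: here there is no coarsening, the full data $\boldsymbol{X}$ are observed, and there is no initially-unobserved component $\overline{\sigma}_\infty(\boldsymbol{X})$ to recover, so this index contributes nothing and requires no assumption (consistent with Assumptions~\ref{ass:identc}--\ref{ass:posc} being stated only for $k \neq \infty$). Next, fix any $k \neq \infty$. By Assumption~\ref{ass:identc} we have $S \independent \overline{\sigma}_k(\boldsymbol{X}) \mid C = k, \sigma_k(\boldsymbol{X})$, so the conditional density of $\overline{\sigma}_k(\boldsymbol{X})$ given $(C = k, \sigma_k(\boldsymbol{X}), S)$ does not depend on the value of $S$; in particular
\[
p^*(\overline{\sigma}_k(\boldsymbol{X}) \mid C = k, \sigma_k(\boldsymbol{X}), S) = p(\overline{\sigma}_k(\boldsymbol{X}) \mid C = k, \sigma_k(\boldsymbol{X}), S = 1),
\]
and the right-hand side is determined entirely by $P$. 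Positivity (Assumption~\ref{ass:posc}) is what licenses conditioning on the event $\{S = 1\}$ within the relevant stratum: it guarantees $P[S = 1 \mid C = k, \sigma_k(\boldsymbol{X})] \geq \epsilon > 0$, $P$-almost surely, so the conditional law on the right is well-defined almost surely.

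Substituting these identities back into the displayed factorization of $p^*$ shows that every factor of the complete-data density is a function of $p$, whence $p$ identifies $p^*$, i.e.\ $P^*$ is identified from $P$. I expect the only real care to be bookkeeping: matching the indicator exponents $\mathds{1}(C = k)$ in the factorization of $p^*$ with the exponents $\mathds{1}(S = 1, C = k)$ in the factorization of $p$, and checking that the conditional-independence swap together with positivity holds simultaneously across all finite $k$ (a $P$-almost-sure statement over the strata). There is no genuine analytic obstacle beyond this; the entire content resides in Assumptions~\ref{ass:identc} and \ref{ass:posc}, exactly as in the scalar case treated in Proposition~\ref{prop:ident}.
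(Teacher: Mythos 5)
Your proposal is correct and follows essentially the same route as the paper's own proof: factor the complete-data density into the observed factor $p(C, \sigma_C(\boldsymbol{X}), S)$ times the conditional law of $\overline{\sigma}_C(\boldsymbol{X})$, use Assumption~\ref{ass:identc} to replace that conditional by its $S=1$ version (licensed by Assumption~\ref{ass:posc}), and dismiss $C=\infty$ as vacuous since $\overline{\sigma}_\infty(\boldsymbol{X})$ is degenerate. The only difference is presentational --- you write the argument factor-by-factor over $k$ while the paper writes it once in terms of the generic index $C$ --- which changes nothing of substance.
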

\begin{proof}
  By Assumption \ref{ass:identc}, the full data distribution may be
  factorized via
  \begin{align}
    \begin{split}\label{eq:identc2}
      p^*(C, \sigma_C(\boldsymbol{X}), S,
      \overline{\sigma}_C(\boldsymbol{X})) &= p(C,
      \sigma_C(\boldsymbol{X}), S)
      p^*(\overline{\sigma}_C(\boldsymbol{X})
      \mid C, \sigma_C(\boldsymbol{X}), S) \\
      &= p(C, \sigma_C(\boldsymbol{X}), S)
      p(\overline{\sigma}_C(\boldsymbol{X}) \mid C,
      \sigma_C(\boldsymbol{X}), S = 1)
    \end{split}
  \end{align}
  which only depends on the observed data distribution $p$. Note that
  we may safely introduce $S = 1$ in the conditioning event due to
  Assumption \ref{ass:posc}, and we may ignore the vacuous case $C = \infty$ as
  $\sigma_\infty(\boldsymbol{X}) = \boldsymbol{X}$, given which
  $\overline{\sigma}_{\infty}(\boldsymbol{X})$ is degenerate.
\end{proof}

The identifying Assumption \ref{ass:identc} may be interpreted as
asserting that whether or not a subject is successfully double sampled
is independent of all the initially unobserved data, conditional on
all the initially observed data. While we have circumvented the need
for the usual coarsening at random assumption for the initial sample, it is worth noting that
given only observed data $O$, Assumption \ref{ass:identc} is
untestable. In practice, however, these can be ensured
by certain study designs and the successful follow-up of the chosen subsample: (i) subsample selection completely at random
prior to the study; (ii) subsample selected at random among those with
any initially missing information; and (iii) subsample selected with
investigator-defined probabilities depending only on
$(C, \sigma_C(\boldsymbol{X}))$. Of course, successful follow-up of
the entire intended subsample may not be possible, and in these cases it must be that initially observed data is sufficient to predict successful follow-up. In general, if the same method of contacting subjects is used at the first stage and second stage of data collection, then it may be unreasonable to assume that coarsening at random fails to hold but Assumption \ref{ass:identc} is valid. Thus, the double sampling approach may be most justifiable when the method of data collection at the second stage differs from the first, e.g., in the EHR example, where the initial sample is the data that happened to be recorded in the electronic record, and the subsample is followed up via telephone or in-depth chart review.

On the other hand, Assumption \ref{ass:posc} asserts that there
are no subpopulations, defined by observed data patterns, that are
systematically excluded from the double sampling strategy (other than
those with initially complete data) --- if this were
not the case, one could not learn about the subpopulations with initial missing information
that were not followed up.

\subsection{Estimation of complete data parameters}
Suppose interest lies in estimating the complete data parameter
$\chi(P_{\boldsymbol{X}}^*) \in \mathbb{R}$, viewed as a functional
from a model space of probability distributions on $\boldsymbol{X}$
--- to which $P_{\boldsymbol{X}}^*$ belongs --- to the real line. By
Proposition \ref{prop:identc}, under Assumptions 
\ref{ass:identc} and
\ref{ass:posc}, any complete data functional $\chi(P_{\boldsymbol{X}}^*)$
has a corresponding observed data functional representation
$\tau(P)$. For example, if
$\chi(P_{\boldsymbol{X}}^*) = \mathbb{E}_{P^*}(g(\boldsymbol{X}))$,
for some function $g$, then $
  \chi(P_{\boldsymbol{X}}^*)
  =
    \mathbb{E}_{P}(\mathbb{E}_{P^*}(g(\boldsymbol{X}) \mid C,
    \sigma_C(\boldsymbol{X})))
  =
    \mathbb{E}_P(\mathbb{E}_{P}(g(\boldsymbol{X}) \mid C,
    \sigma_C(\boldsymbol{X}), S = 1)) \eqqcolon \tau(P)$.
It will often be the case that if the complete data $\boldsymbol{X}$ were
completely observed, one would have in mind a valid estimator of
$\chi(P_{\boldsymbol{X}}^*)$. A natural goal is thus to develop a
general procedure that can in a certain sense transform a complete-data
estimator into one that uses only the observed data
$O_1, \ldots, O_n$. The following proposition is the key semiparametric-theoretical result that will facilitate such a
procedure. Note that it can be seen as a special case of the general theory developed in \citet{robins1994}.
\begin{proposition}\label{prop:IFc}
  Suppose $\chi$ is pathwise differentiable \footnote{see e.g. \citet[Chapter 3]{bkrw1993} for precise definitions.}
  with respect to the complete data model at $P_{\boldsymbol{X}}^*$,
  with influence function
  $\dot{\chi}(\boldsymbol{X}; P_{\boldsymbol{X}}^*)$ (with respect to maximal tangent space), and that
  Assumptions \ref{ass:identc} and \ref{ass:posc} hold. Then $\tau(P) = \chi(P_{\boldsymbol{X}}^*)$
  is pathwise differentiable with influence function
  $
    \dot{\tau}(O; P) = \nu_C(\sigma_C(\boldsymbol{X})) +
    \frac{S}{\eta(C,
      \sigma_C(\boldsymbol{X}))}\left\{\dot{\chi}(\boldsymbol{X};
      P_{\boldsymbol{X}}^*) -
      \nu_C(\sigma_C(\boldsymbol{X}))\right\}
  $ at $P$,
  where
  $\nu_C(\sigma_C(\boldsymbol{X})) =
    \mathbb{E}_P(\dot{\chi}(\boldsymbol{X}; P_{\boldsymbol{X}}^*) \mid
    C, \sigma_C(\boldsymbol{X}), S = 1)$, and
  $\eta(C, \sigma_C(\boldsymbol{X})) = P[S = 1\mid C,
  \sigma_C(\boldsymbol{X})]$. Recalling that
  $\sigma_{\infty}(\boldsymbol{X}) = \boldsymbol{X}$, we allow for the
  possibility that $\eta(\infty, \boldsymbol{X}) = 0$, and define
  $\dot{\tau}(O; P)$ to equal
  $\nu_{\infty}(\boldsymbol{X}) = \dot{\chi}(\boldsymbol{X};
  P_{\boldsymbol{X}}^*)$ when $C = \infty$.
\end{proposition}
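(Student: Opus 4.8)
The plan is to verify directly, via the score/inner-product characterization of pathwise differentiability, that $\dot{\tau}(O;P)$ is a gradient of $\tau$ at $P$; the canonicity asserted by the parenthetical on $\dot{\chi}$ will then follow because the induced observed-data model is locally nonparametric. First I would fix a regular one-dimensional parametric submodel $\{P_t^*\}$ of the complete-data model with $P_0^* = P^*$, and decompose its score at $t=0$ along the variation-independent factorization used in the proof of Proposition~\ref{prop:identc}, namely
\[
s^*(\text{complete}) = s_{\boldsymbol{X}}(\boldsymbol{X}) + s_{C\mid\boldsymbol{X}}(C,\boldsymbol{X}) + s_{S}(S,C,\sigma_C(\boldsymbol{X})),
\]
where $\mathbb{E}_{P^*}[s_{\boldsymbol{X}}]=0$, $\mathbb{E}_{P^*}[s_{C\mid\boldsymbol{X}}\mid\boldsymbol{X}]=0$, and $\mathbb{E}_{P^*}[s_S\mid C,\sigma_C(\boldsymbol{X})]=0$. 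The essential structural inputs are that Assumption~\ref{ass:identc} forces the second-stage selection score to depend on $\boldsymbol{X}$ only through $(C,\sigma_C(\boldsymbol{X}))$, and Assumption~\ref{ass:posc} bounds $S/\eta$ by $\epsilon^{-1}$, so $\dot{\tau}(O;P)\in L_2(P)$ whenever $\dot{\chi}\in L_2(P_{\boldsymbol{X}}^*)$. Because $\tau(P)=\chi(P_{\boldsymbol{X}}^*)$ depends on $P^*$ only through its full-data marginal, pathwise differentiability of $\chi$ gives immediately $\tfrac{d}{dt}\tau(P_t)|_{t=0}=\mathbb{E}_{P^*}[\dot{\chi}(\boldsymbol{X})\,s_{\boldsymbol{X}}(\boldsymbol{X})]$.

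Next I would record two consequences of the conditional independence $S\independent\overline{\sigma}_C(\boldsymbol{X})\mid C,\sigma_C(\boldsymbol{X})$ in Assumption~\ref{ass:identc}, which carry all of the identification content. First, conditioning on $(C,\sigma_C(\boldsymbol{X}))$ makes $S$ independent of $\boldsymbol{X}$, so that $\nu_C(\sigma_C(\boldsymbol{X}))=\mathbb{E}_P[\dot{\chi}\mid C,\sigma_C,S=1]=\mathbb{E}_{P^*}[\dot{\chi}\mid C,\sigma_C]$; this is precisely the step rendering $\nu_C$ estimable from observed data. Second, writing $\dot{\tau}=\nu_C+\tfrac{S}{\eta}(\dot{\chi}-\nu_C)=\tfrac{S}{\eta}\dot{\chi}+(1-\tfrac{S}{\eta})\nu_C$, which is manifestly $O$-measurable since $S\dot{\chi}$ is observed, the same independence together with $\mathbb{E}_{P^*}[S\mid C,\sigma_C]=\eta$ yields $\mathbb{E}_{P^*}[\dot{\tau}\mid C,\boldsymbol{X}]=\dot{\chi}(\boldsymbol{X})$, the unbiasedness of the inverse-probability-weighted term. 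These two identities reduce every subsequent calculation to iterated conditioning.

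The core step is to show $\mathbb{E}_P[\dot{\tau}(O)\,s(O)]=\mathbb{E}_{P^*}[\dot{\chi}\,s_{\boldsymbol{X}}]$ for the induced observed score $s(O)=\mathbb{E}_{P^*}[s^*\mid O]$. Since $\dot{\tau}$ is $O$-measurable, the tower property gives $\mathbb{E}_P[\dot{\tau}\,s]=\mathbb{E}_{P^*}[\dot{\tau}\,s^*]$, which I would expand against the three pieces. The full-data piece returns the target, $\mathbb{E}_{P^*}[\dot{\tau}\,s_{\boldsymbol{X}}]=\mathbb{E}_{P^*}[\mathbb{E}_{P^*}[\dot{\tau}\mid\boldsymbol{X}]\,s_{\boldsymbol{X}}]=\mathbb{E}_{P^*}[\dot{\chi}\,s_{\boldsymbol{X}}]$, using the second identity. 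The coarsening cross-term vanishes because $\mathbb{E}_{P^*}[\dot{\tau}\mid C,\boldsymbol{X}]=\dot{\chi}(\boldsymbol{X})$ depends on $(C,\boldsymbol{X})$ only through $\boldsymbol{X}$, against which $s_{C\mid\boldsymbol{X}}$ has conditional mean zero. The selection cross-term is the one delicate point and the main obstacle: conditioning on $(C,\boldsymbol{X})$ the summand $\nu_C s_S$ integrates away since $\mathbb{E}_{P^*}[s_S\mid C,\sigma_C]=0$, but the remaining contribution is $\tfrac{\dot{\chi}-\nu_C}{\eta}\,\mathbb{E}_{P^*}[S s_S\mid C,\sigma_C]$, which is \emph{not} zero pointwise (it involves the derivative of $\eta$); it vanishes only after a \emph{further} conditioning on $(C,\sigma_C)$, where the first identity $\mathbb{E}_{P^*}[\dot{\chi}-\nu_C\mid C,\sigma_C]=0$ kills it. Getting this order of conditioning right, and recognizing that it is exactly the definition of $\nu_C$ that rescues the cross-term, is the crux of the argument.

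Finally I would address canonicity. For a locally nonparametric full-data model the observed-data model is itself locally nonparametric: given any candidate observed law one constructs a complete-data law satisfying Assumptions~\ref{ass:identc}--\ref{ass:posc} inducing it, by setting $p^*(\overline{\sigma}_C\mid C,\sigma_C,S=0):=p(\overline{\sigma}_C\mid C,\sigma_C,S=1)$ as in Proposition~\ref{prop:identc}. Hence the observed tangent space is all of $L_2^0(P)$, the gradient is unique, and the $\dot{\tau}$ just verified is the (nonparametric) influence function; for a restricted full-data model, membership of the image of the maximal-tangent-space $\dot{\chi}$ in the observed tangent space is exactly the content of the coarsened-data theory of \citet{robins1994} that the proposition invokes. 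The boundary case $C=\infty$ is handled separately but trivially: there $\sigma_\infty(\boldsymbol{X})=\boldsymbol{X}$, $\overline{\sigma}_\infty$ is degenerate, $\nu_\infty(\boldsymbol{X})=\dot{\chi}(\boldsymbol{X})$, and the stated convention $\dot{\tau}(O;P)=\dot{\chi}(\boldsymbol{X})$ is consistent with the formula, contributing no extra terms.
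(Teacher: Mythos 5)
Your argument is correct and follows essentially the route the paper takes: it deferred the proof to its appendix as "a special case of the general theory developed in Robins et al. (1994)," i.e., the standard coarsened-data verification that the augmented inverse-probability-weighted transform of $\dot{\chi}$ is a mean-zero, $O$-measurable gradient, checked against the factorized complete-data score $s_{\boldsymbol{X}} + s_{C\mid\boldsymbol{X}} + s_S$. You correctly isolate the one nontrivial step --- the selection cross-term vanishes only after the additional conditioning on $(C,\sigma_C(\boldsymbol{X}))$, precisely because $\nu_C = \mathbb{E}_{P^*}(\dot{\chi}\mid C,\sigma_C(\boldsymbol{X}))$ under Assumption \ref{ass:identc} --- and your handling of the $C=\infty$ boundary case and of canonicity via local nonparametricity of the induced observed-data model is consistent with the paper's remarks following Theorem \ref{thm:asympc}.
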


\begin{proof}
Recall from \citet{bkrw1993} and \citet{tsiatis2007}
that an \textit{influence function}
of a pathwise differentiable functional $\chi(P)$, at $P$
in a given statistical model, is a zero-mean finite-variance
function $\dot{\chi}(O;P)$ of observed data $O$ such that for
any regular parametric submodel $\{P_{\epsilon}: \epsilon \in [0,1)\}$ through $P_{0} \equiv P$, it holds that
\[\left. \frac{d}{d\epsilon}\chi(P_\epsilon)
\right|_{\epsilon = 0} = 
\mathbb{E}_P\left(\dot{\chi}(O;P) g(O)\right),\]
where $g(O)$ is the score function of the parametric 
submodel at $P$. The \textit{tangent set} $\mathcal{T}_P$ of the statistical model 
at $P$ is the set of all score
functions of one-dimensional regular parametric submodels
through $P$, and the \textit{tangent space} is $\Lambda_P = \overline{[\mathcal{T}_P]}$, the
closure of the linear span (with respect to
the Hilbert space $L_2(P)$) of the tangent set. The \textit{efficient influence function} of $\chi$
at $P$ is the unique influence function belonging
to $\Lambda_P$. When $\Lambda_P = L_2(P)$, the model is
said to be nonparametric, and there is a unique influence
function, often called the nonparametric influence function.
See \citet{bkrw1993} and \citet{vandervaart2000} for precise
definitions.

\vspace{2mm}
  Let $\{P_{\epsilon} \mid \epsilon \in [0, 1)\}$ be an arbitrary
  one-parameter regular parametric submodel through $P \equiv
  P_0$. Note that, for any $\epsilon \in [0,1)$,
  \[p_\epsilon(O) = p_\epsilon(C, \sigma_C(\boldsymbol{X}),
    S)p_\epsilon(\overline{\sigma}_C(\boldsymbol{X}) \mid C,
    \sigma_C(\boldsymbol{X}), S = 1)^S,\] so we must have
  \[g(O) = g_{C, \sigma_C(\boldsymbol{X}), S} + S \cdot
    g_{\overline{\sigma}_C(\boldsymbol{X}) \mid C,
      \sigma_C(\boldsymbol{X}), S = 1},\] where $g(O)$ is the score
  function of the submodel, typically
  \[g(O) = \left.\frac{d}{d\epsilon}
      \log{(p_{\epsilon}(O))}\right|_{\epsilon = 0},\] and
  $g_{A \mid B}$ is the conditional score of $A$ given $B$ for
  arbitrary variables $A, B$. Of course, this submodel defines a
  regular parametric submodel
  $\{P_{\epsilon}^* \mid \epsilon \in [0,1)\}$ through the full
  data distribution $P^* \equiv P_{0}^*$ by Assumption \ref{ass:identc}: for
  any $\epsilon \in [0,1)$,
  \[p_{\epsilon}^*(C, \sigma_C(\boldsymbol{X}), S,
    \overline{\sigma}_C(\boldsymbol{X})) = p_\epsilon(C,
    \sigma_C(\boldsymbol{X}),
    S)p_\epsilon(\overline{\sigma}_C(\boldsymbol{X}) \mid C,
    \sigma_C(\boldsymbol{X}), S = 1).\] Next, observe that
  \begin{align*}
    &\mathbb{E}_P\left(\frac{S}{\eta(C, \sigma_C(\boldsymbol{X}))}
      \nu_C(\sigma_C(\boldsymbol{X})) g(O)\right) \\
    &= \mathbb{E}_P\left(\frac{S}{\eta(C, \sigma_C(\boldsymbol{X}))}
      \nu_C(\sigma_C(\boldsymbol{X})) g_{C, \sigma_C(\boldsymbol{X}), S}\right), \\
    &= \mathbb{E}_{P}\left(\frac{S}{\eta(C, \sigma_C(\boldsymbol{X}))}
      \dot{\chi}(\boldsymbol{X}; P_{\boldsymbol{X}}^*)
      g_{C, \sigma_C(\boldsymbol{X}), S}\right).
  \end{align*}
  Here, the first equality results from conditioning on
  $(C, \sigma_C(\boldsymbol{X}), S)$, given which
  $g_{\overline{\sigma}_C(\boldsymbol{X}) \mid C,
    \sigma_C(\boldsymbol{X}), S}$ has mean zero
  ($g_{\overline{\sigma}_C(\boldsymbol{X}) \mid C,
    \sigma_C(\boldsymbol{X}), S = 1}$ can be replaced by
  $g_{\overline{\sigma}_C(\boldsymbol{X}) \mid C,
    \sigma_C(\boldsymbol{X}), S}$ due to the presence of indicator
  $S$). The second equality again can be seen by conditioning
  throughout by $(C, \sigma_C(\boldsymbol{X}), S)$. Thus,
  \begin{align*}
    &\mathbb{E}_P\left(\frac{S}{\eta(C, \sigma_C(\boldsymbol{X}))}
      \left\{\dot{\chi}(\boldsymbol{X}; P_{\boldsymbol{X}}^*) -
      \nu_C(\sigma_C(\boldsymbol{X}))\right\} g(O)\right) \\
    &= \mathbb{E}_{P}\left(\frac{S}{\eta(C, \sigma_C(\boldsymbol{X}))}
      \dot{\chi}(\boldsymbol{X}; P_{\boldsymbol{X}}^*)
      [g(O) - g_{C, \sigma_C(\boldsymbol{X}), S}]\right), \\
    &= \mathbb{E}_{P}\left(\frac{S}{\eta(C, \sigma_C(\boldsymbol{X}))}
      \dot{\chi}(\boldsymbol{X}; P_{\boldsymbol{X}}^*)
      g_{\overline{\sigma}_C(\boldsymbol{X}) \mid C,
      \sigma_C(\boldsymbol{X}), S = 1}\right), \text{ since } S^2 = S, \\
    &= \mathbb{E}_{P^*}\left(\frac{P^*[S = 1 \mid C,
      \sigma_C(\boldsymbol{X}), \overline{\sigma}_C(\boldsymbol{X})]}
      {\eta(C, \sigma_C(\boldsymbol{X}))}
      \dot{\chi}(\boldsymbol{X}; P_{\boldsymbol{X}}^*)
      g_{\overline{\sigma}_C(\boldsymbol{X}) \mid C,
      \sigma_C(\boldsymbol{X}), S = 1}\right), \\
    &= \mathbb{E}_{P^*}\left(
      \dot{\chi}(\boldsymbol{X}; P_{\boldsymbol{X}}^*)
      g_{\overline{\sigma}_C(\boldsymbol{X}) \mid C,
      \sigma_C(\boldsymbol{X}), S = 1}\right).
  \end{align*}
  In the third equality, we introduced $P^*$ as it induces $P$, we
  conditioned on
  $(C, \sigma_C(\boldsymbol{X}),
  \overline{\sigma}_C(\boldsymbol{X}))$, and used the fact that, by
  construction, $\boldsymbol{X}$ is equal to
  $h_C(\sigma_C(\boldsymbol{X}),
  \overline{\sigma}_C(\boldsymbol{X}))$; in the fourth equality, we
  used Assumption \ref{ass:identc}.

  Now, see that
  \begin{align*}
    \mathbb{E}_P\left(
    \nu_C(\sigma_C(\boldsymbol{X})) g(O)\right)
    &= \mathbb{E}_P\left(\nu_C(\sigma_C(\boldsymbol{X}))
      g_{C, \sigma_C(\boldsymbol{X}), S}\right), \\
    &= \mathbb{E}_{P^*}\left(\dot{\chi}(\boldsymbol{X}; P_{\boldsymbol{X}}^*)
      g_{C, \sigma_C(\boldsymbol{X}), S}\right),
  \end{align*}
  where in the first equality we note that
  $S \cdot g_{\overline{\sigma}_C(\boldsymbol{X}) \mid C,
    \sigma_C(\boldsymbol{X}), S = 1}$ has mean zero given
  $(C, \sigma_C(\boldsymbol{X}), S)$, and the second equality can be
  seen by conditioning on $(C, \sigma_C(\boldsymbol{X}), S)$ and again
  using
  $S \independent \overline{\sigma}_C(\boldsymbol{X}) \mid C,
  \sigma_C(\boldsymbol{X})$ under $P^*$.

  Finally, defining
  \[\dot{\tau}(O;P) = \nu_C(\sigma_C(\boldsymbol{X})) +
    \frac{S}{\eta(C, \sigma_C(\boldsymbol{X}))}
    \left\{\dot{\chi}(\boldsymbol{X}; P_{\boldsymbol{X}}^*) -
      \nu_C(\sigma_C(\boldsymbol{X}))\right\},\]
  we have shown that
  \[\mathbb{E}_P\left(
      \dot{\tau}(O;P) g(O)\right) =
    \mathbb{E}_{P^*}\left(\dot{\chi}(\boldsymbol{X};
      P_{\boldsymbol{X}}^*) g^*\right),\] where
  $g^* = g_{C, \sigma_C(\boldsymbol{X}), S} +
  g_{\overline{\sigma}_C(\boldsymbol{X}) \mid C,
    \sigma_C(\boldsymbol{X}), S = 1}$ is the score of the full
  data submodel through $P^*$. But by assumption that $\chi$ is
  regular at $P_{\boldsymbol{X}}^*$,
  \[\mathbb{E}_{P^*}\left(\dot{\chi}(\boldsymbol{X};
      P_{\boldsymbol{X}}^*) g^*\right) = \left. \frac{d}{d\epsilon}
      \chi(P_{\boldsymbol{X},\epsilon}^*) \right|_{\epsilon = 0},\]
  so that
  \[\left. \frac{d}{d\epsilon}
      \tau(P_{\epsilon}) \right|_{\epsilon = 0} = \mathbb{E}_P\left(
      \dot{\tau}(O;P) g(O)\right)\] as
  $\tau(P_{\epsilon}) = \chi(P_{\boldsymbol{X},\epsilon}^*)$ for all
  $\epsilon \in [0,1)$, by construction. By definition, this means
  that $\dot{\tau}(O;P)$ is an influence function for $\tau$ at $P$,
  as claimed.  
\end{proof}

We remark that another way to interpret the term $\frac{S}{\eta(C, \sigma_C(\boldsymbol{X}))}$ in the case that $\eta(\infty, \boldsymbol{X}) = 0$ is
to use the convention
$0 \cdot \infty = 0$, so that the second term drops out.

We are now equipped to define a one-step estimator of
the general functional $\tau(P)$ that
uses its estimated influence function to correct the bias
of a plugin estimator $\tau(\widehat{P})$. Depending on the form of the
parameter and its influence function, certain components of the
observed data distribution may not need to be estimated (e.g., the running causal example of this paper). In general, though, an estimate of
$P_{\boldsymbol{X}}^*$ can be reconstructed by marginalizing an
estimated version of the identified full data density \eqref{eq:identc2} over $(C, S)$. Letting
$\lambda_C(\overline{\sigma}_C(\boldsymbol{X});
\sigma_C(\boldsymbol{X}))$ be the distribution function of
$\overline{\sigma}_C(\boldsymbol{X})$ given
$C, \sigma_C(\boldsymbol{X}), S = 1$, we can use
$\widehat{P}_{\boldsymbol{X}}^*$, $\widehat{\eta}$,
$\widehat{\lambda}$ to obtain
$
  \dot{\tau}(O; \widehat{P}) =
  \widehat{\nu}_C(\sigma_C(\boldsymbol{X})) +
  \frac{S}{\widehat{\eta}(C,
    \sigma_C(\boldsymbol{X}))}\left\{\dot{\chi}(\boldsymbol{X};
    \widehat{P}_{\boldsymbol{X}}^*) -
    \widehat{\nu}_C(\sigma_C(\boldsymbol{X}))\right\}
$,
where
$\widehat{\nu}_C(\sigma_C(\boldsymbol{X})) = \int
  \dot{\chi}(h_C(\sigma_C(\boldsymbol{X}), \boldsymbol{t});
  \widehat{P}_{\boldsymbol{X}}^*) \,
  d\widehat{\lambda}_{C}(\boldsymbol{t} ; \sigma_C(\boldsymbol{X}))$.

We propose to use sample splitting and cross-fitting \citep{chernozhukov2018},
and fit $\widehat{P}_k$ using data $I_k^c$, for $k=1,\ldots,K$.
We then define
$\widehat{\tau}_k = \tau(\widehat{P}_k) + \frac{K}{n} \sum_{i \in I_k}
  \dot{\tau}(O_i; \widehat{P}_k), \text{ for } k = 1, \ldots, K$,
so that the sample-split influence function-based estimator is 
given by 
$\widehat{\tau} = \frac{1}{K}\sum_{k=1}^k \widehat{\tau}_k$.

\subsection{Consistency, asymptotic normality, and robustness} \label{sec:asympc}
The following result is the basis
for consistency, asymptotic normality, nonparametric
efficiency, and multiple robustness
of the proposed nonparametric influence function-based
estimator $\widehat{\tau}$.
\begin{theorem}\label{thm:asympc}
  Suppose
  $\left\lVert \dot{\tau}(\, \cdot \, ; \widehat{P}_k) - \dot{\tau}(\,
    \cdot \, ; P)\right\rVert = o_P(1)$ for $k = 1, \ldots, K$. Then
  \begin{align*}
    \widehat{\tau} - \tau(P) = O_P\left(\frac{1}{\sqrt{n}} +
    \frac{1}{K}\sum_{k=1}^K \mathrm{Bias}_{\tau}(\widehat{P}_k; P)\right),
  \end{align*}
  where
  $\mathrm{Bias}_{\tau}(\widehat{P}; P) = \mathbb{E}_P(\dot{\tau}(O;
  \widehat{P})) + \tau(\widehat{P}) - \tau(P)$ for any
  $\widehat{P}$. Moreover, if
  $\mathrm{Bias}_{\tau}(\widehat{P}_k; P) = o_P(n^{-1/2})$, for
  $k = 1, \ldots, K$, then
  $\sqrt{n}(\widehat{\tau} - \tau(P)) \overset{d}{\to} \mathcal{N}(0,
    V)$, where $V = \mathrm{Var}_P(\dot{\tau}(O; P))$ is the
  nonparametric efficiency bound.
\end{theorem}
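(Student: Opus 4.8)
The plan is to follow the now-standard cross-fitting decomposition (as in \citet{chernozhukov2018}), exploiting the fact that $\widehat{P}_k$ is fit only on $I_k^c$ and is therefore independent of the fold $I_k$ over which the estimated influence function is averaged. Writing $\mathbb{P}_n^k f = \frac{K}{n}\sum_{i \in I_k} f(O_i)$ for the empirical average over the $k$-th fold and $Pf = \mathbb{E}_P(f(O))$, the starting point is the algebraic identity
\[
\widehat{\tau}_k - \tau(P) = \mathbb{P}_n^k\,\dot{\tau}(\cdot; P) + (\mathbb{P}_n^k - P)\bigl(\dot{\tau}(\cdot; \widehat{P}_k) - \dot{\tau}(\cdot; P)\bigr) + \mathrm{Bias}_{\tau}(\widehat{P}_k; P),
\]
which follows by adding and subtracting $\mathbb{P}_n^k\,\dot{\tau}(\cdot; P)$ and $P\,\dot{\tau}(\cdot; \widehat{P}_k)$ and recalling that $P\,\dot{\tau}(\cdot; P) = 0$, since $\dot{\tau}(\cdot; P)$ is a mean-zero influence function. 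Averaging over $k = 1, \ldots, K$ collapses the first term to $\mathbb{P}_n\,\dot{\tau}(\cdot; P) = \frac{1}{n}\sum_{i=1}^n \dot{\tau}(O_i; P)$, an average of i.i.d. mean-zero terms that is $O_P(n^{-1/2})$ and, by the classical central limit theorem, satisfies $\sqrt{n}\,\mathbb{P}_n\,\dot{\tau}(\cdot; P) \overset{d}{\to}\mathcal{N}(0, V)$ with $V = \mathrm{Var}_P(\dot{\tau}(O; P))$. The averaged third term is exactly $\frac{1}{K}\sum_{k} \mathrm{Bias}_{\tau}(\widehat{P}_k; P)$.

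The crux of the argument is controlling the middle (empirical process) term, and this is precisely where cross-fitting earns its keep. Conditioning on $I_k^c$, the function $g_k \coloneqq \dot{\tau}(\cdot; \widehat{P}_k) - \dot{\tau}(\cdot; P)$ is fixed and the observations in $I_k$ are i.i.d. draws independent of it; hence $(\mathbb{P}_n^k - P)g_k$ is a centered average of i.i.d. terms whose conditional variance is $\frac{K}{n}\mathrm{Var}_P(g_k(O)) \leq \frac{K}{n}\lVert g_k\rVert^2$. A conditional Chebyshev bound then gives $(\mathbb{P}_n^k - P)g_k = O_P\bigl(n^{-1/2}\lVert g_k\rVert\bigr)$, and under the hypothesis $\lVert \dot{\tau}(\cdot; \widehat{P}_k) - \dot{\tau}(\cdot; P)\rVert = o_P(1)$ this is $o_P(n^{-1/2})$. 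The point is that this bound requires no Donsker or entropy condition on the nuisance classes --- the independence supplied by sample splitting replaces the usual empirical-process machinery. Summing these negligible terms over the finite number $K$ of folds and combining with the pieces above yields the first claimed expansion, $\widehat{\tau} - \tau(P) = O_P\bigl(n^{-1/2} + \frac{1}{K}\sum_{k} \mathrm{Bias}_{\tau}(\widehat{P}_k; P)\bigr)$.

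For the second assertion, I would simply add the hypothesis $\mathrm{Bias}_{\tau}(\widehat{P}_k; P) = o_P(n^{-1/2})$, which renders the averaged bias term negligible; the expansion then reduces to $\sqrt{n}(\widehat{\tau} - \tau(P)) = \sqrt{n}\,\mathbb{P}_n\,\dot{\tau}(\cdot; P) + o_P(1)$, and Slutsky's theorem together with the central limit theorem above delivers $\sqrt{n}(\widehat{\tau} - \tau(P)) \overset{d}{\to}\mathcal{N}(0, V)$. That $V$ is the nonparametric efficiency bound is inherited directly from Proposition \ref{prop:IFc}: because $\dot{\tau}(\cdot; P)$ is the influence function with respect to the maximal tangent space, its variance is by definition the nonparametric efficiency bound, so no separate optimization is needed. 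The main obstacle is the empirical process term, but with cross-fitting in place it is dispatched by the elementary conditioning argument rather than by verifying complexity conditions; the remaining steps are routine bookkeeping.
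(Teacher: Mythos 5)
Your proposal is correct and follows essentially the same route as the paper's Appendix A proof: the standard cross-fitting decomposition into a centered empirical average of $\dot{\tau}(\cdot;P)$, an empirical-process remainder controlled by conditional Chebyshev thanks to the independence of $\widehat{P}_k$ from fold $I_k$, and the first-order bias term $\mathrm{Bias}_{\tau}(\widehat{P}_k;P)$, exactly in the style of \citet{chernozhukov2018} and \citet{kennedy2020} that the paper invokes. The asymptotic normality and the identification of $V$ with the nonparametric efficiency bound via uniqueness of the influence function in the maximal tangent space are likewise handled as in the paper.
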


\begin{proof}
  For a given subset $k \in \{1, \ldots, K\}$, we can decompose
  the error of $\widehat{\tau}_k$ relative to $\tau(P)$
  via:
  \begin{align*} \widehat{\tau}_k - \tau(P)
  &=
    \frac{K}{n}\sum_{i\in I_k} \dot{\tau}(O_i, P) \\
  &
    \quad \quad + \left\{\mathbb{E}_P(\dot{\tau}(O; \widehat{P}_k)) +
    \tau(\widehat{P}_k) - \tau(P)\right\} \\
  & \quad
    \quad + \left\{\frac{K}{n}\sum_{i\in I_k} \left[\dot{\tau}(O_i;
    \widehat{P}_k) - \dot{\tau}(O_i, P)\right] -
    \mathbb{E}_P(\dot{\tau}(O; \widehat{P}_k))\right\}.
\end{align*}
  Thus, the error of $\widehat{\tau}$ with respect to
  $\tau(P)$ can be decomposed as follows:
  \begin{align*}
    \widehat{\tau} - \tau(P)
    &= \frac{1}{n}\sum_{i=1}^n
      \dot{\tau}(O_i, P) +
      \frac{1}{K}\sum_{k=1}^K
      \mathrm{Bias}_{\tau}(\widehat{P}_k; P) \\
    & \quad \quad +
      \frac{1}{K}\sum_{k=1}^K\left\{\frac{K}{n}\sum_{i
      \in I_k}
      \left[\dot{\tau}(O_i;
      \widehat{P}_k) -
      \dot{\tau}(O_i, P)\right] -
      \mathbb{E}_P(\dot{\tau}(O;
      \widehat{P}_k))\right\}.
  \end{align*}
  By the central limit theorem, the first term is $O_P(n^{-1/2})$ as
  \[\frac{1}{\sqrt{n}}\sum_{i=1}^n \dot{\tau}(O_i, P) \overset{d}{\to}
    \mathcal{N}(0, V),\] 
    where $V = \mathrm{Var}_P(\dot{\tau}(O;P))$. 
    Next, invoking Lemma 2 in
  \citet{kennedy2020b} or \citet{kennedy2020}, for each
  $k = 1, \ldots, K$, as
  $\{O_i : i \in I_k\} \independent \{O_i : i \in I_k^c\}$,
  \begin{align*}
    &\frac{K}{n}\sum_{i
      \in I_k}
      \left[\dot{\tau}(O_i;
      \widehat{P}_k) -
      \dot{\tau}(O_i, P)\right] -
      \mathbb{E}_P(\dot{\tau}(O;
      \widehat{P}_k)) \\
    &= O_P\left(\frac{\sqrt{K}}{\sqrt{n}}
      \left\lVert \dot{\tau}(\, \cdot \, ; \widehat{P}_k) - \dot{\tau}(\,
      \cdot \, ; P)\right\rVert\right), \\
    &= o_P\left(\frac{\sqrt{K}}{\sqrt{n}}\right),
  \end{align*}
  as $n/K \to \infty$, since
  $\left\lVert \dot{\tau}(\, \cdot \, ; \widehat{P}_k) - \dot{\tau}(\,
    \cdot \, ; P)\right\rVert = o_P(1)$ by assumption. Equivalently,
  as $n \to \infty$, this $k$-th term is $o_P(n^{-1/2})$. As $K$ is
  fixed, an average of $K$ such terms is also $o_P(n^{-1/2})$, and the
  result follows.
\end{proof}

It is important to note that we have only established that
$\widehat{\tau}$ is fully (semiparametric) efficient in a
nonparametric model, i.e., when the model tangent space (see
\citet{bkrw1993}, \citet{tsiatis2007}) is $L_2^0(P)$, consisting of
all mean-zero functions of $O$ with finite variance. 
Nonparametric efficiency of
$\widehat{\tau}$ is guaranteed because there is a unique (thus
efficient) influence function in a nonparametric model, its variance
equal to the nonparametric efficiency bound \citep{bkrw1993}. Derivation of the semiparametric efficient observed data
influence function in proper semiparametric models where restrictions are
placed on $P$ will depend on the form of those restrictions
(e.g., when MAR holds as in Analysis \#3), so
we leave characterizations of efficiency over classes
of restrictions on $P$ for future research.

The next two propositions relate the asymptotic variance
and bias term (as defined in Proposition \ref{prop:IFc})
of $\widehat{\tau}$ to that of a complete-data influence function
based estimator.
\begin{proposition}\label{prop:var}
  Let $\dot{\tau}(O; P)$ be the observed data influence function
  defined in Proposition \ref{prop:IFc}, where $P$ is induced by $P^*$
  satisfying Assumptions \ref{ass:identc} and \ref{ass:posc}. Then the observed
  data nonparametric efficiency bound for estimating
  $\tau(P) = \chi(P_{\boldsymbol{X}}^*)$ is given by:
  $\mathrm{Var}_P(\dot{\tau}(O; P))
    = \mathrm{Var}_{P^*}(\dot{\chi}(\boldsymbol{X};
      P_{\boldsymbol{X}}^*)) + \mathbb{E}_P\left(\left(\frac{1}{\eta(C,
      \sigma_C(\boldsymbol{X}))} - 1\right)
      \mathrm{Var}_P(\dot{\chi}(\boldsymbol{X}; P_{\boldsymbol{X}}^*)
      \mid C, \sigma_C(\boldsymbol{X}), S = 1)\right)$.
\end{proposition}

\begin{proof}
  Note that the second summand of the observed data influence function
  $\dot{\tau}(O; P)$ has mean zero given $(C, \sigma_C(X), S)$, as
  \begin{align*}
    &
      S \, \mathbb{E}_P\left(\dot{\chi}(\boldsymbol{X};
      P_{\boldsymbol{X}}^*) -
      \nu_C(\sigma_C(\boldsymbol{X})) \, \middle| \, C, \sigma_C(X), S\right) \\
    &= S\left\{ \mathbb{E}_P\left(
      \dot{\chi}(\boldsymbol{X};
      P_{\boldsymbol{X}}^*) \, \middle| \, C, \sigma_C(\boldsymbol{X}), S = 1\right) -
      \nu_C(\sigma_C(\boldsymbol{X}))\right\} \\
    &= 0,
  \end{align*}
  by definition of $\nu_C(\sigma_C(\boldsymbol{X}))$. It follows that
  the two summands are uncorrelated, and
  \begin{align*}
    \mathrm{Var}_P(\dot{\tau}(O; P))
    &= \mathrm{Var}_P(\nu_C(\sigma_C(\boldsymbol{X}))) \\
    & \quad \quad +
      \mathrm{Var}_P\left(\frac{S}{\eta(C,
      \sigma_C(\boldsymbol{X}))}\left\{\dot{\chi}(\boldsymbol{X};
      P_{\boldsymbol{X}}^*) -
      \nu_C(\sigma_C(\boldsymbol{X}))\right\}\right).
  \end{align*}
  By the law of total variance, and given that the second summand has
  mean zero given $(C, \sigma_C(X), S)$,
  \begin{align*}
    & \mathrm{Var}_P\left(\frac{S}{\eta(C,
      \sigma_C(\boldsymbol{X}))}\left\{\dot{\chi}(\boldsymbol{X};
      P_{\boldsymbol{X}}^*) -
      \nu_C(\sigma_C(\boldsymbol{X}))\right\}\right) \\
    &= \mathbb{E}_P\left(\frac{S}{\eta(C,
      \sigma_C(\boldsymbol{X}))^2}\mathrm{Var}_P\left(\dot{\chi}(\boldsymbol{X};
      P_{\boldsymbol{X}}^*) \mid C, \sigma_C(X), S =1
      \right)\right) \\
    &= \mathbb{E}_P\left(\frac{1}{\eta(C,
      \sigma_C(\boldsymbol{X}))}\mathrm{Var}_P\left(\dot{\chi}(\boldsymbol{X};
      P_{\boldsymbol{X}}^*) \mid C, \sigma_C(X), S =1
      \right)\right),
  \end{align*}
  where in the last equality we conditioned on
  $(C, \sigma_C(\boldsymbol{X}))$. Finally, as
  \[\mathrm{Var}_P\left(\dot{\chi}(\boldsymbol{X};
      P_{\boldsymbol{X}}^*) \mid C, \sigma_C(X), S =1 \right) =
    \mathrm{Var}_{P^*}\left(\dot{\chi}(\boldsymbol{X};
      P_{\boldsymbol{X}}^*) \mid C, \sigma_C(X) \right),\]
  and
  \[\mathbb{E}_P\left(
      \dot{\chi}(\boldsymbol{X}; P_{\boldsymbol{X}}^*) \, \middle| \,
      C, \sigma_C(\boldsymbol{X}), S = 1\right) =
    \mathbb{E}_{P^*}\left( \dot{\chi}(\boldsymbol{X};
      P_{\boldsymbol{X}}^*) \, \middle| \, C,
      \sigma_C(\boldsymbol{X})\right),\]
  by Assumption \ref{ass:identc}, adding and subtracting
  \[\mathbb{E}_P\left(\mathrm{Var}_P\left(\dot{\chi}(\boldsymbol{X};
        P_{\boldsymbol{X}}^*) \mid C, \sigma_C(X), S =1
      \right)\right)\] yields the result, again by the law of total
  variance.
\end{proof}

 \begin{proposition} \label{prop:bias} 
 Let $P$, $\widetilde{P}$ be arbitrary
  putative observed data distributions, induced by full data
  distributions $P^*$, $\widetilde{P}^*$, respectively, and assume
  that both $P^*$ and $\widetilde{P}^*$ satisfy Assumptions
  \ref{ass:identc} and \ref{ass:posc}. Then
      $\mathrm{Bias}_{\tau}(\widetilde{P}; P)   \coloneqq \mathbb{E}_P(\dot{\tau}(O; \widetilde{P})) +
      \tau(\widetilde{P}) - \tau(P)$ is equal
      to
      $\mathrm{Bias}_{\chi}(\widetilde{P}_{\boldsymbol{X}}^*;
      P_{\boldsymbol{X}}^*) + 
      \mathbb{E}_{P}\left[\left(1 - \frac{\eta(C, \sigma_C(\boldsymbol{X}))}
          {\widetilde{\eta}(C, \sigma_C(\boldsymbol{X}))}\right)
        \left\{\widetilde{\nu}_C(\sigma_C(\boldsymbol{X})) -
          \check{\nu}_C(\sigma_C(\boldsymbol{X}))
        \right\}\right]$,
  where
  $\mathrm{Bias}_{\chi}(\widetilde{P}_{\boldsymbol{X}}^*;
  P_{\boldsymbol{X}}^*)$ is similarly defined to equal
  $\mathbb{E}_{P^*}(\dot{\chi}(\boldsymbol{X};
  \widetilde{P}_{\boldsymbol{X}}^*)) +
  \chi(\widetilde{P}_{\boldsymbol{X}}^*) - \chi(P_{\boldsymbol{X}}^*)$,
  $\widetilde{\nu}_C(\sigma_C(\boldsymbol{X})) \coloneqq
    \mathbb{E}_{\widetilde{P}}(\dot{\chi}(\boldsymbol{X};
    \widetilde{P}_{\boldsymbol{X}}^*) \mid C,
    \sigma_C(\boldsymbol{X}), S = 1)$,  and
  $\check{\nu}_C(\sigma_C(\boldsymbol{X})) \coloneqq
    \mathbb{E}_P(\dot{\chi}(\boldsymbol{X};
    \widetilde{P}_{\boldsymbol{X}}^*) \mid C, \sigma_C(\boldsymbol{X}),
    S = 1)$.
\end{proposition}
\begin{proof}
  Observe that
  \[\mathbb{E}_P\left(\check{\nu}_C(\sigma_C(\boldsymbol{X}))\right)
    = \mathbb{E}_{P^*}(\dot{\chi}(\boldsymbol{X};
    \widetilde{P}_{\boldsymbol{X}}^*)),\] by iterated expectations and
  Assumption \ref{ass:identc}. Moreover,
  \begin{align*}
    &\mathbb{E}_P\left(\frac{S}{\widetilde{\eta}(C, \sigma_C(\boldsymbol{X}))}
      \left\{\dot{\chi}(\boldsymbol{X}; \widetilde{P}_{\boldsymbol{X}}^*) -
      \widetilde{\nu}_C(\sigma_C(\boldsymbol{X})) \right\}\right) \\
    &= \mathbb{E}_{P^*}\left(\frac{\eta(C, \sigma_C(\boldsymbol{X}))}
      {\widetilde{\eta}(C, \sigma_C(\boldsymbol{X}))}
      \left\{\dot{\chi}(\boldsymbol{X}; \widetilde{P}_{\boldsymbol{X}}^*) -
      \widetilde{\nu}_C(\sigma_C(\boldsymbol{X})) \right\}\right), \\
    &= \mathbb{E}_{P}\left(\frac{\eta(C, \sigma_C(\boldsymbol{X}))}
      {\widetilde{\eta}(C, \sigma_C(\boldsymbol{X}))}
      \left\{
      \check{\nu}_C(\sigma_C(\boldsymbol{X})) -
      \widetilde{\nu}_C(\sigma_C(\boldsymbol{X}))\right\} \right),
  \end{align*}
  where in the first equality we condition on
  $(C, \sigma_C(\boldsymbol{X}),
  \overline{\sigma}_C(\boldsymbol{X}))$, and in the second equality we
  condition on $(C, \sigma_C(\boldsymbol{X}))$ and use
  Assumption \ref{ass:identc}. Hence,
  \begin{align*}
    &\mathbb{E}_P(\dot{\tau}(O; \widetilde{P})) \\
    &= \mathbb{E}_P\left(\widetilde{\nu}_C(\sigma_C(\boldsymbol{X})) +
      \frac{S}{\widetilde{\eta}(C, \sigma_C(\boldsymbol{X}))}
      \left\{\dot{\chi}(\boldsymbol{X}; \widetilde{P}_{\boldsymbol{X}}^*) -
      \widetilde{\nu}_C(\sigma_C(\boldsymbol{X})) \right\}\right), \\
    &= \mathbb{E}_P\left(\check{\nu}_C(\sigma_C(\boldsymbol{X}))\right) +
      \mathbb{E}_P\left(\widetilde{\nu}_C(\sigma_C(\boldsymbol{X})) -
      \check{\nu}_C(\sigma_C(\boldsymbol{X}))\right)\\
    & \quad \quad +
      \mathbb{E}_{P}\left(\frac{\eta(C, \sigma_C(\boldsymbol{X}))}
      {\widetilde{\eta}(C, \sigma_C(\boldsymbol{X}))}
      \left\{
      \check{\nu}_C(\sigma_C(\boldsymbol{X})) -
      \widetilde{\nu}_C(\sigma_C(\boldsymbol{X}))\right\} \right), \\
    &= \mathbb{E}_{P^*}(\dot{\chi}(\boldsymbol{X};
      \widetilde{P}_{\boldsymbol{X}}^*)) \\
    & \quad \quad +
      \mathbb{E}_{P}\left[\left(1 - \frac{\eta(C, \sigma_C(\boldsymbol{X}))}
      {\widetilde{\eta}(C, \sigma_C(\boldsymbol{X}))}\right)
      \left\{\widetilde{\nu}_C(\sigma_C(\boldsymbol{X})) -
      \check{\nu}_C(\sigma_C(\boldsymbol{X}))
      \right\}\right].
  \end{align*}
  The result is obtained by noticing that
  $\chi(P_{\boldsymbol{X}}^*) = \tau(P)$,
  $\chi(\widetilde{P}_{\boldsymbol{X}}^*) = \tau(\widetilde{P})$.
\end{proof}

\begin{remark}
  By Proposition \ref{prop:bias}, we expect the observed-data influence
  function-based estimator $\widehat{\tau}$ to at least partially
  inherit robustness properties of the complete data influence
  function-based estimator $\widehat{\chi}$. The second term in the
  bias expression can be rewritten
  \begin{equation}
  \mathbb{E}_{P^*}\left[\left(1 -
        \frac{\eta(C, \sigma_C(\boldsymbol{X}))} {\widetilde{\eta}(C,
          \sigma_C(\boldsymbol{X}))}\right)
      \left(\frac{\widetilde{\lambda}_{C}'(\overline{\sigma}_C(\boldsymbol{X});
          \sigma_C(\boldsymbol{X}))}
        {\lambda_{C}'(\overline{\sigma}_C(\boldsymbol{X});
          \sigma_C(\boldsymbol{X}))} - 1 \right)
      \dot{\chi}(\boldsymbol{X};
      \widetilde{P}_{\boldsymbol{X}}^*)\right],
  \end{equation}
  where $\lambda_k'$, $\widetilde{\lambda}_k'$ are conditional
  densities of $\overline{\sigma}_k(\boldsymbol{X})$ given
  $ C = k, \sigma_k(\boldsymbol{X}), S = 1$ under $P$ and
  $\widetilde{P}$, respectively, for any $k \neq \infty$. This term
  can be simplified in certain examples (e.g., in our running example), but generally also exhibits a double robust
  property: it is zero if either $\widetilde{\eta}= \eta$ or
  $\widetilde{\lambda} = \lambda$. Thus, when the double sampling
  probabilities $\eta(C, \sigma_{C}(\boldsymbol{X}))$ are known by
  design, this term is automatically zero, so that
  $\mathrm{Bias}_{\tau}(\widetilde{P}; P) =
  \mathrm{Bias}_{\chi}(\widetilde{P}_{\boldsymbol{X}}^*;
  P_{\boldsymbol{X}}^*)$.
\end{remark}

The variance formula of Proposition \ref{prop:var}
facilitates
an analysis of the loss of efficiency that the estimator based on
double sampling incurs, compared to a complete-data influence function-based 
estimator that uses $\boldsymbol{X}$ on the complete sample. 
On the other hand, in view of Theorem \ref{thm:asympc}, the 
bias
formula in Proposition \ref{prop:bias} 
is essential for determining conditions under which
$\widehat{\tau}$ will be asymptotically normal and with variance
attaining the nonparametric efficiency bound.

To elaborate on the previous point, for many common functionals (e.g.,
the running example in Section \ref{sec:hypothetical}), the complete data
asymptotic bias term
$\mathrm{Bias}_{\chi}(\widehat{P}_{\boldsymbol{X}}^*;
P_{\boldsymbol{X}}^*)$ exhibits a ``mixed bias'' property
\citep{robins2008, rotnitzky2020}, in that it involves the product of
nuisance function estimation errors. Moreover, the additional term in
the bias expression also has this property: it is zero if either
$\widehat{\eta} = \eta$ or $\widehat{\lambda}_k = \lambda_k$ for all
$k \neq \infty$. In particular, this additional term is guaranteed to
be zero when the double sampling probabilities $\eta$ are known by
design. Thus, the proposed observed data influence function-based
estimators inherit any robustness properties of their complete data
counterparts when $\eta$ is known, and otherwise will have a slightly
more elaborate multiple robustness structure due to the additional
term.

Another important consequence of the mixed bias property is that of
``rate double robustness'' \citep{rotnitzky2020}. Specifically, in the
complete data setting, if
$\mathrm{Bias}_{\chi}(\widehat{P}_{\boldsymbol{X}}^*;
P_{\boldsymbol{X}}^*)$ depends on the product of $L_2(P)$-norm errors
for estimating a pair of nuisance functions, and if this product
converges to zero at rate $n^{-1/2}$, then the complete data influence
function-based estimator will achieve $n^{-1/2}$ rate inference. The
benefit is that this allows for more flexible estimation (e.g., errors
converging faster than $n^{-1/4}$) of each of the nuisance functions,
and the required convergence rates can be achieved by state-of-the-art
machine learning models under smoothness or sparsity conditions, for
example. In our case, we require the additional property that either
$\eta$ is known, or else
$\lVert \widehat{\eta} -\eta\rVert \lVert
\widehat{\lambda'}/\lambda' - 1 \rVert = o_P(n^{-1/2})$, 
where $\lambda'$ is the density corresponding to $\lambda$.

Finally, by the asymptotic normality result of Theorem
\ref{thm:asympc}, a simple asymptotic variance estimator can obtained
from the empirical variance of the estimated influence functions. The variance estimate for
general functional $\tau(P)$ is
$
  n \cdot \widehat{\mathrm{Var}}(\widehat{\tau}) = \frac{1}{n}\sum_{k
    = 1}^K \sum_{i \in I_k} (\dot{\tau}(O_i; \widehat{P}_k))^2$.
Further, a Wald-type confidence interval is given by
$\widehat{\tau} \pm z_{1 - \alpha / 2}
  \sqrt{\widehat{\mathrm{Var}}(\widehat{\tau})},$ with $z_{\alpha}$
denoting the $\alpha$-quantile of the standard normal distribution.

\section{Semiparametric results
under Assumption \ref{ass:MAR}} \label{app:B}
The observed data density is given by
\[p_o(O) = p_o(\boldsymbol{L}, A, R) 
p_o(Y \mid \boldsymbol{L}, A, R = 1)^R
p_o(S \mid \boldsymbol{L}, A, R = 0)^{1-R}
p_o(Y \mid \boldsymbol{L}, A, S = 1)^S,\]
by Assumption \ref{ass:ident} and the assertion that 
$R = 1$ implies 
$S = 0$, i.e., $S \equiv S(1 - R)$.
Under the MAR assumption (i.e., Assumption \ref{ass:MAR}, 
$R \independent Y \mid \boldsymbol{L}, A$), and Assumption
\ref{ass:ident}, 
$S \independent Y \mid \boldsymbol{L}, A, R = 0$,
we can conclude that 
\begin{equation}\label{eq:MAR2}
    (R, S) \independent Y \mid \boldsymbol{L}, A.
\end{equation} 
As a result, the conditional
densities $p_o(Y \mid \boldsymbol{L}, A, R = 1)$ and 
$p_o(Y \mid \boldsymbol{L}, A, S = 1)$ are equal 
to $p_c(Y \mid \boldsymbol{L}, A) = 
p_o(Y \mid \boldsymbol{L}, A, R + S = 1)$. Thus, the final observed data
distribution $P_o$ belongs to the semiparametric model induced by
MAR if and only if its density can factorized according to
\[p_o(O) = p_o(\boldsymbol{L}, A, R) 
p_o(S \mid \boldsymbol{L}, A, R = 0)^{1 - R}
p_o(Y \mid \boldsymbol{L}, A, R + S = 1)^{R + S}.\]
By Lemma 24 of \citet{rotnitzky2020b}, the tangent space of
the semiparametric model at $P_o$ is
\[\Lambda_{P_o} = \Lambda_{\boldsymbol{L}} \oplus 
\Lambda_{A \mid \boldsymbol{L}} \oplus 
\Lambda_{R \mid \boldsymbol{L}, A} \oplus
(1-R)\Lambda_{S \mid \boldsymbol{L}, A, R}
\oplus (R + S) \Lambda_{Y \mid \boldsymbol{L}, A, R + S},\]
where for any random vectors $\boldsymbol{W}, \boldsymbol{V}$,
$\Lambda_{\boldsymbol{W} \mid \boldsymbol{V}} = 
\left\{a(\boldsymbol{W}, \boldsymbol{V}) \in L_2(P_o):
\mathbb{E}_{P_o}(a(\boldsymbol{W}, \boldsymbol{V}) \mid 
\boldsymbol{V}) = 0\right\}$.
Now, under the MAR semiparametric model, 
$\tau_a^*(P_o) = 
\mathbb{E}_{P_o}(\mu_{a, \mathrm{MAR}}(\boldsymbol{L})) = \tau_a(P_o)$,
where $\mu_{a, \mathrm{MAR}}(\boldsymbol{L}) = 
\mathbb{E}_{P_o}(Y \mid \boldsymbol{L}, A = a, R + S = 1)$, as
\begin{align*}
    \mu_{a}(\boldsymbol{L}) 
    &= \mu_{a,
  R}(\boldsymbol{L})\gamma_a(\boldsymbol{L}) + \mu_{a,
  S}(\boldsymbol{L}) (1 - \gamma_a(\boldsymbol{L})) \\
  &= \mu_{a, \mathrm{MAR}}(\boldsymbol{L}) 
  \gamma_a(\boldsymbol{L}) + 
  \mu_{a, \mathrm{MAR}}(\boldsymbol{L})  (1 - \gamma_a(\boldsymbol{L})) \\
  &= \mu_{a, \mathrm{MAR}}(\boldsymbol{L}),
\end{align*}
by \eqref{eq:MAR2}. Moreover, the nonparametric influence
function of $\tau_a^*(P_o)$ is simply
\[\dot{\tau}_{a, \mathrm{MAR}}(O;P_o) = 
\mu_{a, \mathrm{MAR}}(\boldsymbol{L}) - \tau_a^*(P_o) +
\frac{T}{P_o[T = 1 \mid \boldsymbol{L}]}
(Y - \mu_{a, \mathrm{MAR}}(\boldsymbol{L})),\]
where $T = (R+S)\mathds{1}(A = a)$ --- viewing $T$
as a modified treatment indicator, the influence function
must be of the same form as that for the usual counterfactual mean functional \citep{hahn1998}. The modified
treatment probability can be expanded to 
\begin{align*}
    P_o[T = 1 \mid \boldsymbol{L}] 
    &= P_o[A = a \mid \boldsymbol{L}]
P_o[R = 1 \vee S = 1 \mid \boldsymbol{L}, A = a] \\
&=
\pi_a(\boldsymbol{L})\{\gamma_a(\boldsymbol{L}) +
(1-\gamma_a(\boldsymbol{L}))\eta_{a, 0}(\boldsymbol{L})\}.
\end{align*}
Finally, we notice that $\dot{\tau}_{a, \mathrm{MAR}}(O;P_o)$ must
be the efficient influence function under $P_o$, as it belongs to $\Lambda_{P_o}$. To see
this, note that 
$\mu_{a, \mathrm{MAR}}(\boldsymbol{L}) - \tau_a^*(P_o)$ is a 
mean-zero function of $\boldsymbol{L}$, so belongs to 
$\Lambda_{\boldsymbol{L}}$, and 
$\frac{T}{P_o[T = 1 \mid \boldsymbol{L}]}
(Y - \mu_{a, \mathrm{MAR}}(\boldsymbol{L}))$ has mean zero
given $\boldsymbol{L}, A, R + S$, so belongs to
$(R + S) \Lambda_{Y \mid \boldsymbol{L}, A, R + S}$.

Suppose now that we are not willing to assume that $P_o$
belongs to the semiparametric model induced by MAR. Under
mild assumptions (e.g., similar to Theorem \ref{thm:asympc}),
an influence function-based estimator using
$\dot{\tau}_{a, \mathrm{MAR}}(O;P_o)$ will be consistent for
$\tau_a^*(P_o)$. However, we will generally incur some bias
because $\tau_a^*(P_o)$ is now not guaranteed to equal
$\tau_a(P_o)$. Specifically, observe that
\begin{align*}
    & \mu_{a, \mathrm{MAR}}(\boldsymbol{L}) \\
    &= \mathbb{E}_{P_o}(Y \mid \boldsymbol{L}, A = a, R + S = 1) \\
    &= \mathbb{E}_{P_o}\left[\mathbb{E}_{P_o}(Y \mid \boldsymbol{L}, 
    A = a, R, S, R + S = 1) \mid \boldsymbol{L}, A = a, 
    R + S = 1\right] \\
    &= \mu_{a, R}(\boldsymbol{L})P_o[R = 1 \mid \boldsymbol{L}, 
    A = a, R + S = 1] + \mu_{a, S}(\boldsymbol{L})P_o[S = 1 
    \mid \boldsymbol{L}, A = a, R + S = 1] \\
    &= \frac{\mu_{a, R}(\boldsymbol{L})
    \gamma_{a}(\boldsymbol{L}) + \mu_{a, S}(1 -
    \gamma_a(\boldsymbol{L}))
    \eta_{a, 0}(\boldsymbol{L})}{\gamma_a(\boldsymbol{L}) +
    (1-\gamma_a(\boldsymbol{L}))\eta_{a, 0}(\boldsymbol{L})}.
\end{align*}
Thus, the bias of the outcome model is given by
\begin{align*} 
    &\mu_{a, \mathrm{MAR}}(\boldsymbol{L}) -
    \mu_a(\boldsymbol{L}) \\
    &= \frac{\mu_{a, R}(\boldsymbol{L})
    \gamma_{a}(\boldsymbol{L}) + \mu_{a, S}(1 -
    \gamma_a(\boldsymbol{L}))
    \eta_{a, 0}(\boldsymbol{L})}{\gamma_a(\boldsymbol{L}) +
    (1-\gamma_a(\boldsymbol{L}))\eta_{a, 0}(\boldsymbol{L})}
    - \mu_{a, R}(\boldsymbol{L})
    \gamma_{a}(\boldsymbol{L}) - \mu_{a, S}(1 -
    \gamma_a(\boldsymbol{L}))
     \\
    &= \gamma_{a}(\boldsymbol{L})\frac{(1 -
    \gamma_a(\boldsymbol{L})) (1 - \eta_{a, 0}(\boldsymbol{L}))}
    {1 - (1 -
    \gamma_a(\boldsymbol{L})) (1 - \eta_{a, 0}(\boldsymbol{L}))}
    (\mu_{a, R}(\boldsymbol{L}) - \mu_{a, S}(\boldsymbol{L})),
\end{align*}
and the overall bias $\tau_a^*(P_o) - \tau_a(P_o)$ is the 
expectation of this quantity.

\end{appendices}
\end{document}